\newcommand{\Oh}{\mathcal{O}}
\newcommand{\W}{\widehat{W}}
\newcommand{\adj}[3]{#1^{#3}_{#2}}
\newcommand{\righty}[2]{\adj{#1}{#2}{+}}
\newcommand{\lefty}[2]{\adj{#1}{#2}{-}}
\DeclarePairedDelimiter\ceil{\lceil}{\rceil}
\DeclarePairedDelimiter\floor{\lfloor}{\rfloor}
\newcommand{\ignore}[1]{}
\newcommand{\point}{{\textsc{Point}}}
\newcommand{\interval}{\textsc{Int}}
\tikzstyle{tikzfig}=[baseline=-0.25em,scale=0.5]
\tikzstyle{none}=[inner sep=0mm]
\newcommand{\tikzfig}[1]{%
{\tikzstyle{every picture}=[tikzfig]
\IfFileExists{#1.tikz}
  {\input{#1.tikz}}
  {%
    \IfFileExists{figures/#1.tikz}
      {\input{figures/#1.tikz}}
      {\tikz[baseline=-0.5em]{\node[draw=red,font=\color{red},fill=red!10!white] {\textit{#1}};}}%
  }}%
}
\newcommand{\ctikzfig}[1]{%
\begin{center}\rm
  \tikzfig{#1}
\end{center}}
\tikzstyle{every loop}=[]
\tikzstyle{circle}=[fill=black, draw=black, shape=circle]
\tikzstyle{empty circle}=[fill=white, draw=black, shape=circle]
\tikzstyle{diamond}=[fill={rgb,255: red,213; green,213; blue,213}, draw=black, shape=rectangle, aspect=4, scale=0.4]
\tikzstyle{rect}=[fill=white, draw=black, shape=rectangle]
\tikzstyle{selected}=[fill={rgb,255: red,81; green,181; blue,0}, draw=black, shape=rectangle]
\tikzstyle{normal}=[-, fill=none]
\tikzstyle{dotted}=[-, dashed]
\tikzstyle{pointer}=[->]
\tikzstyle{measure}=[<->, draw={rgb,255: red,0; green,46; blue,255}]
\tikzstyle{probspace}=[draw={rgb,255: red,2; green,82; blue,255}, <->]
\tikzstyle{Wcandidate}=[->, draw={rgb,255: red,28; green,113; blue,0}]
\tikzstyle{nonWcandidate}=[draw={rgb,255: red,211; green,0; blue,0}, ->]
\tikzstyle{filled}=[-, fill={rgb,255: red,162; green,162; blue,162}, draw=none, tikzit draw={rgb,255: red,6; green,6; blue,6}]
\tikzstyle{measuredotted}=[draw={rgb,255: red,0; green,46; blue,255}, dashed, <->]
\tikzstyle{measuresingle}=[draw={rgb,255: red,0; green,46; blue,255}, ->]
\tikzstyle{dotted measure}=[-, draw={rgb,255: red,0; green,53; blue,245}, dashed]
\tikzstyle{pointerdotted}=[->, dashed, draw=red]
\title{Multiwinner Voting with Interval Preferences \\ under Incomplete Information}
\author{Drew Springham\inst{1,3} \and Edith Elkind\inst{2} \and Bart de Keijzer\inst{1} \and Maria Polukarov\inst{1}}
\institute{King's College London\and Northwestern University\and \email{drew.springham@kcl.ac.uk}}
\begin{document}



\maketitle 

\begin{abstract}
In multiwinner approval elections with many candidates, voters may struggle to determine their preferences over the entire slate of candidates. It is therefore of interest to explore which (if any) fairness guarantees can be provided under reduced communication. 
In this paper, we consider voters with one-dimensional preferences: voters and candidates are associated with points in $\mathbb R$, and each voter's approval set forms an interval of $\mathbb R$.
We put forward a probabilistic preference model, where the voter set consists of $\sigma$ different groups; each group is associated with a distribution over an interval of $\mathbb R$, so that 
each voter draws the endpoints of her approval interval from the distribution associated with her group. We present an algorithm for computing committees that provide {\em Proportional Justified Representation + (PJR+)}, which proceeds by querying voters' preferences, and show that, 
in expectation, it makes $\mathcal{O}(\log( \sigma\cdot k))$ queries per voter, where $k$ is the desired committee size.
\end{abstract}

\section{Introduction}
The problem of selecting a set of winning candidates from a wider pool, given the preferences of many agents, arises in a wide variety of settings. This problem is formalized as \emph{Multiwinner Voting (MWV)}. Real-world examples include, but are not limited to, committee selection, parliamentary elections, group recommendations, and opinion summarization \citep{Lackner2023Multi-WinnerPreferences,Halpern2024ComputingVotes,revel25}. Often, we wish to select a committee of size $k$ in a way that is ``fair''; this is captured 
by the concept of \emph{Justified Representation (JR)} and its extensions, such as PJR, PJR+, EJR, EJR+, FPJR, and FJR
\citep{Sanchez-Fernandez2017ProportionalRepresentation,Brill2023RobustVoting,Aziz2017JustifiedVoting,Peters2020ProportionalityWelfarism,Kalayci2025FullRepresentation}.
Each of these extensions captures the idea that groups making up at least an $\ell/k$ fraction of the voter population deserve to be represented by at least $\ell$ of the $k$ candidates in the winning set. For example, for the winning set $W$ to provide PJR+\,---the axiom we focus on in this work---for every group of voters $S$ that constitutes an $\ell/k$ fraction of the population and approves of a candidate not in $W$, the set $W$ must contain $\ell$ candidates approved by some voter in $S$.

Importantly, some multiwinner elections have a large number of candidates in consideration. 
E.g., in the well-known \emph{participatory budgeting} setting, in which residents of a city can propose projects to be completed, and then vote on the proposals, there may be dozens of projects to choose from: indeed, 
there are real-life examples of participatory budgeting elections with more than 150 candidate projects \citep{Faliszewski2023ParticipatoryAnalysis}.
Another example is during debates on Pol.is \citep{small2021polis}, an online platform for civic participation. Here, citizens vote on comments (candidates) that are made by others, and we wish to select a representative slate of opinions. In some debates, e.g. during an online citizens assembly meeting on climate law in Austria, there can be over 1000 comments made \citep{TheComputationalDemocracyProject2022TheAustria}.
In such circumstances, one cannot realistically expect the voters to accurately evaluate every single candidate. 
Therefore, it is of interest to consider multiwinner voting mechanisms that do not require full information from each voter about their preferences, but still provide guarantees on the social desirability of the outcome.

In this paper, we introduce the general problem of multiwinner voting with incomplete information in a spatial preferences setting, where voters and candidates are associated with points in ${\mathbb R}^d$, and each voter approves candidates within a certain distance from her. We focus on the one-dimensional case of this general problem, where each voter's approval set forms an interval of $\mathbb R$ (her \emph{approval interval}). 
Such a model may be appropriate, e.g., in a political setting, where we can view candidates as positioned on the left-right political spectrum; then, each voter only disapproves of candidates they find too extreme (to the left or to the right). 

We put forward a rich probabilistic model to represent the voters' preferences in this setting: the \emph{Random Interval Voter model (RIV)}. In RIV (\Cref{def:sriv}), we assume that voters are divided into groups, so that each group is associated with a disjoint subinterval of $\mathbb R$ and a voter's approval interval is obtained by independently sampling two points from a given distribution over the interval associated with their group. 

When the voter population is large, a probabilistic model can be thought of as providing aggregated preference information, based, e.g., on historical data.
In particular, an attractive feature of RIV is that it can express both large-scale trends and small-scale behaviour.
Indeed, by associating each voter with a disjoint subinterval of $\mathbb R$, we can capture settings where candidates are partitioned into well-defined groups (parties), so that voters' preferences are consistent with this partition; in this setting, each party naturally corresponds to a subinterval of $\mathbb R$. Thus, the RIV model can be thought of as a probabilistic variant of open party-list elections.
By modelling voters as having random endpoints within each segment, we can capture voters' within-party preferences by specifying a segment-specific cumulative distribution function. 
Together, both factors give RIV sufficient depth to be a nuanced model of how agents might vote in practice.


\paragraph{Our contributions.}
We develop a preference elicitation framework that is tailored to the RIV model. Our framework allows two types of queries: {\em point queries}, where we ask if a voter approves a specific candidate, and {\em interval queries},  
where we ask if a voter's set of approved candidates is contained within a given interval. We show that, in this model, we can elicit the preferences of each voter using $\Oh(\log m)$ queries per voter in expectation, where $m$ is the number of candidates; this enables us to construct an outcome in the core \cite{Aziz2017JustifiedVoting,Pierczynski2022Core-StableDomains}.
Then, our main technical contribution is an algorithm that outputs committees providing a notion of proportionality that is slightly weaker than core stability, namely, PJR+ \cite{Brill2023RobustVoting}, and only requires $\mathcal{O}( \log (\sigma k))$ queries per voter in expectation, 
where $k$ is the target committee size and $\sigma$ the number of segments. Notably, unlike for the core, our bound for PJR+ does not depend on the total number of candidates.
Moreover, while the bound on the running time holds in expectation (and also with high probability), the PJR+ guarantee holds with probability~1.

\paragraph{Outline.} In \Cref{sec:related}, we review prior work on multiwinner voting with incomplete information and preference elicitation. Next, in \Cref{sec:prelims} we introduce the preliminary definitions and results, which lay the foundation for the contributions of this paper. In \Cref{sec:ciriv}, we describe our probabilistic model, and formulate the general problem of multiwinner voting with incomplete information in a spatial setting. \Cref{sec:obt} discusses our preferences elicitation framework. We also formulate accompanying lemmas, which we put to use in \Cref{sec:pjr}, where we prove that we can construct a committee that provides PJR+ using limited queries per voter in expectation. Finally, in \Cref{sec:conclusion} we conclude with ideas for future work and open questions.

\section{Related Work}
\label{sec:related}
There is substantial literature on approval-based multiwinner voting in general (see \citep{Lackner2023Multi-WinnerPreferences} for an overview). 
The preference elicitation aspect of voting has also been studied extensively, especially in the single-winner context (e.g. \citep{Conitzer2002VoteStrategy-Proofness,Meir2014AEquilibria,Dey2015SampleElections,Dery2016ReducingMaking,Zhao2018AAggregation}).

A number of papers on preference elicitation in multiwinner voting  optimise objectives other than proportionality, e.g., 
minimax regret \citep{Lu2013Multi-WinnerPreferences}, diversity \citep{Lindeboom2025AInformation}, or social welfare~\citep{Mandal2020OptimalVoting}.

\citet{Imber2022Approval-BasedInformation} consider the problem of winner determination with incomplete preferences, under several voting rules  
that guarantee forms of justified representation. However, their model does not 
allow for preference elicitation: In their setting, each voter reports a subset of their approved candidates and a subset of their disapproved candidates, with the remaining candidates being unknown, and the goal is to determine which candidates are possibly or necessarily in the winning set.
Subsequently, \citet{Imber2024SpatialInformation} considered the same problem under the assumption that voters and candidate lie in a $d$-dimensional Euclidean space. There are two crucial differences between this line of work and our contribution: (1) we allow the algorithm to query the voters, and (2) we take a probabilistic perspective rather than a nondeterministic perspective. 

The closest in spirit to our approach is the work of 
\citet{Halpern2023RepresentationVotes}, who also consider querying voters in order to achieve a form of justified representation.  
They formalise an idea of approximate justified representation, by scaling up the size of a voter group that ``deserves'' representation by a factor of $(1+\varepsilon)$, and provide adaptive algorithms to achieve approximate EJR (a notion that is similar to PJR+, but incomparable with it). They also provide a lower bound on the number of voters that need to be sampled to achieve EJR with probability of at least $1-\delta$. \citet{Brill2023RobustVoting} discuss a similar technique for achieving EJR+ with high probability in the general preference setting. 
Our work differs from \citet{Halpern2023RepresentationVotes} and \citet{Brill2023RobustVoting} in two main ways: 
(1) while their model allows for unrestricted approval preferences, we assume that voters and candidates are located in a metric space; (2) our model is Bayesian, i.e., our algorithm makes use of distributional information and our bounds on the number of queries hold with high probability and in expectation. 
In particular, our query model allows for queries that
ask a voter if her approval set is fully contained within a given interval; 
such queries are very natural in the metric setting, but their non-metric analogue (asking a voter whether all her approved candidates belong to some subset of candidates) has a higher cognitive and communication cost.
The metric and distributional assumptions, together with the richer query model, enable us to obtain stronger results: while the number of queries in the work of \citet{Halpern2023RepresentationVotes} scales linearly with the number of candidates $m$, our algorithm achieves PJR+ using a number of queries that, in expectation, does not depend on $m$.
Moreover, we achieve proportionality with probability 1, whereas \citet{Halpern2023RepresentationVotes} and \citet{Brill2023RobustVoting} aim for EJR+ with probability $1-\varepsilon$.


\section{Preliminaries}
\label{sec:prelims}
Here, we present the relevant background to introduce our work.
All our indexing of ordered sets (or lists) starts at 0, and we use ``list slicing'' notation of the form $X[a:b]\vcentcolon=\{X[i]:a\leq i <b\}$.\footnote{Similar to the Python programming language.} We also use the notation $[t]\vcentcolon=\{i\in \mathbb{N}:1\leq i \leq t\}$ to denote the set of natural numbers up to $t$.

\begin{definition}{(Approval-based MWV election)}
An \emph{MWV election} is a tuple $E=(V,C,k,A)$, where $V$ is a set of $n$ voters, $C$ is a set of $m$ candidates, $k\in \mathbb{N}$ is a total number of candidates to elect, and $A:V\rightarrow 2^C$ is a function that maps each voter to the set of candidates she approves.
For each $G\subseteq V$, we write $A(G)=\bigcap_{v\in G}A(v)$.
The primary task associated with an MWV election is to select a set, or committee, of {\em winners} $W\subseteq C$ with $|W|=k$.
\end{definition}
The literature on multiwinner voting with approval preferences defines a number of fairness axioms, ranging from Justified Representation (JR) to core stability~\cite{Lackner2023Multi-WinnerPreferences}. 
All these axioms aim to capture the idea that large groups of voters with similar preferences should be represented in the committee in proportion to the group size;
however, they differ in how they define which groups of voters are considered to have similar preferences and what it means to represent a group.
Below, we define two axioms from this hierarchy, namely, PJR+ \cite{Brill2023RobustVoting} and core stability~\cite{Aziz2017JustifiedVoting}. In what follows, we will show that our method selects PJR+ committees with high probability while asking a small number of queries; obtaining a similar result for more demanding axioms such as EJR+ remains a direction for future work, discussed in \Cref{sec:conclusion}.

\begin{definition}{(PJR+)}
A committee $W$ satisfies PJR+ \citep{Brill2023RobustVoting} (a.k.a.~IPSC \citep{Aziz2021ProportionallyPreferences}) if $|W|=k$ and for every $\ell \in [k]$ and every group $G\subseteq V$ that satisfies $|G|\geq n\ell /k$, $A(G)\setminus W\neq\varnothing$ it holds that $|W\cap \bigcup_{v\in G}A(v)|\geq \ell$.
\end{definition}
In words, a committee $W$ satisfies PJR+ if there is no group of voters who (i) deserve $\ell$ representatives, (ii) can all agree on a candidate not in $W$, but (iii) collectively support fewer than $\ell$ members of $W$. 


\begin{definition}{(The core)}
\label{def:core}
    A committee $W$ satisfies {\em core stability} \citep{Aziz2017JustifiedVoting,Lackner2023Multi-WinnerPreferences}, or is said to be \emph{in the core}, if $|W|=k$ and for every $\ell \in [k]$, every $T\subseteq C$ with $|T|\leq \ell$, and every group $G\subseteq V$ with $|G|\geq n\ell /k$ there is a voter $v\in G$ with $|W\cap A(v)|\geq |T\cap A(v)|$.
\end{definition}
Core stability is a more demanding property than PJR+: if a committee fails PJR+ then it is not in the core, but the converse is not true\footnote{This is claimed by Kalayci et al.~\citep{Kalayci2025FullRepresentation} without proof; for completeness, we provide a proof in \Cref{sec:corepjr+}.}.
Every MWV election has a committee that satisfies PJR+. In particular, the Method of Equal Shares (MES) \citep{Peters2020ProportionalityWelfarism} runs in polynomial time and outputs committees that satisfy PJR+\footnote{MES satisfies a stronger notion of proportionality called EJR+\cite{Brill2023RobustVoting}, which we shall discuss in \Cref{sec:conclusion}}. 
Later, we will use MES as a backup method to find a PJR+ committee (\Cref{alg:PJR}). 
On the other hand, it remains open whether every approval-based MWV election has a core stable committee \citep{Aziz2017JustifiedVoting,Lackner2023Multi-WinnerPreferences}.

We will consider one-dimensional approval preferences.
\begin{definition}{(Candidate-Interval (CI) preferences)}
    Given a set of candidates $C$ and a linear order $\lhd$ over $C$, we say that a subset of candidates $T\subseteq C$ is \emph{consecutive} if for every $x,z\in T$ and $y\in C$ with $x\lhd y\lhd z$ it holds that $y\in T$.
    An election $E=(V, C, k, A)$ has \emph{Candidate Interval (CI) preferences} if there exists a linear order $\lhd$ over $C$ such that for each voter $v$, the set $A(v)$ is consecutive \citep{Elkind2015StructurePreferences}.
\end{definition}
If $C\subseteq \mathbb{R}$, each voter approves of an interval of candidates: $T\subseteq C$ is consecutive if there exists an interval $I=[a,b]$ such that $T=C\cap I$. A visual example of CI preferences can be found in \Cref{fig:querying-example}.
One can interpret CI preferences as single-peaked preferences in the dichotomous setting; this class of preferences
has been explored in a number of papers, e.g,  \citep{Peters2020PreferencesCircle,Godziszewski2021AnElections,Terzopoulou2021RestrictedInformation,Pierczynski2022Core-StableDomains}.


\section{Random Interval Voter Model (RIV)}
\label{sec:ciriv}
In this section, we  introduce the probabilistic model for CI preferences that will be used throughout this paper; we will also briefly discuss an extension of this model to more than one dimension.

\begin{definition}
\label{def:sriv}
An {\em RIV model} is a $\sigma$-length list of 3-tuples $\mathcal{M} = (I_t, F_t, p_t)_{t\in[\sigma]}$, where $\sigma\in \mathbb{N}$, and for
each $t\in[\sigma]$ it holds that
$I_t=(z^-_t,z^+_t)$ is a subset of $\mathbb R$, 
with $I_x\cap I_y=\varnothing$ for $x\neq y$, $F_t$ is an invertible cumulative distribution function over $I_t$, $p_t\in[0, 1]$, and  $\sum_{t\in[\sigma]}p_t=1$. 
Given a candidate set  $C\subset\bigcup_{t\in[\sigma]}I_t$, we say that a voter $v$ is {\em sampled according to $\mathcal M$} if her ballot is obtained as follows:
a segment $I_t$ is chosen with probability $p_t$, two positions $X,Y\in I_t$ are drawn from distribution $F_t$, we set $a_v=\min(X, Y)$, $b_v=\max(X, Y)$,  
and the voter's approval ballot is defined as $A^*(v)=[a_v, b_v]$,  $A(v)=A^*(v)\cap C$.

For each $t\in[\sigma]$ we write $C_t=C\cap I_t$, 
and for each $x\in \bigcup_{t\in[\sigma]}I_t$ we define 
$t(x)$ to be the value $t\in[\sigma]$ with $x\in I_t$.

An RIV model $\mathcal M$ is {\em uniform} if $I_t=[t,t+1]$ and $F_t$ is the uniform distribution over $[t,t+1]$ for each $t\in[\sigma]$. Thus, if $\mathcal M$ is a uniform RIV model, we omit $(I_t, F_t)_{t\in[\sigma]}$ from the notation, and write ${\mathcal M} = (p_t)_{t\in[\sigma]}$.
\end{definition}

Note that focusing on uniform RIVs is
without loss of generality: If we are given a general RIV $\mathcal M$ together with a candidate set $C\subseteq \cup_{t\in[\sigma]}I_t$, we can transform it into a uniform RIV ${\mathcal M}'$ with a candidate set $C'\subseteq [1, t+1]$, by mapping each point (e.g., locations of candidates, approval endpoints, query positions) $x\in I_t$ in ${\mathcal M}$ to $t+F_t(x)$ in ${\mathcal M}'$.
Then a candidate is approved by a voter in ${\mathcal M}$ if and only if the transformed candidate is approved by the transformed voter in ${\mathcal M}'$, and it can be verified (see \Cref{sec:unif}) that this occurs with the same probability as in ${\mathcal M}$. 
Thus, from now on we will assume that all RIV instances are uniform.

We now define some notation that will be used throughout, and state a useful lemma.
\begin{definition}
    For each $T\subseteq C$ and $x\in \bigcup_{t=1}^\sigma I_t$, let 
    \begin{align*}
    \lefty{x}{T} & =\max(\{c\in T:c\leq x\}\cup\{z^-_{t(x)}\}), \\
    \righty{x}{T} &=\min (\{c\in T:c\geq x\}\cup\{z^+_{t(x)}\}).
    \end{align*}
\end{definition}
\noindent In words, $\lefty{x}{T}$ (resp., $\righty{x}{T}$) is the closest candidate in $T$ to the left (resp., to the right) of $x$, or the left (resp., right) endpoint of the segment $I_{t(x)}$ if no such candidate exists. 

The following lemma gives a useful expression for the probability of approving and disapproving candidates.
\begin{lemma}
\label{lem:appprob}
For a (uniform) RIV instance, a subset of candidates $S\subseteq C$, and a candidate $c\in C\setminus S$ we have
$$
    \Pr[c\in A(v)\land A(v)\cap S=\varnothing]=
    2p_{t(c)}\left(c - \lefty{c}{S}\right)\left( \righty{c}{S}-c\right).
$$
\end{lemma}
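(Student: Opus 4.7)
The plan is a direct computation by conditioning on which segment the voter is drawn from and then reducing the joint event to a simple rectangle in the unit square.

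First I would observe that, since a voter sampled according to $\mathcal{M}$ has $A^*(v)=[a_v,b_v]\subseteq I_t$ for whatever segment $I_t$ was selected, the event $c\in A(v)$ forces the chosen segment to be $I_{t(c)}$; this contributes a factor $p_{t(c)}$. Let $t^*=t(c)$ and write $\alpha=\lefty{c}{S}$, $\beta=\righty{c}{S}$. Note that by definition $\alpha\in\{z^-_{t^*}\}\cup(S\cap I_{t^*})$ and $\beta\in\{z^+_{t^*}\}\cup(S\cap I_{t^*})$, and between $\alpha$ and $\beta$ there is no element of $S$ other than possibly $\alpha,\beta$ themselves.

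Next I would rewrite the event. Conditional on segment $t^*$, the endpoints are $a_v=\min(X,Y)$ and $b_v=\max(X,Y)$ with $X,Y$ i.i.d.\ uniform on $I_{t^*}=[t^*,t^*+1]$. The condition $c\in A(v)$ is $a_v\le c\le b_v$. Given this, the further condition $A(v)\cap S=\varnothing$ is equivalent to $[a_v,b_v]\cap S=\varnothing$, which (because no element of $S$ lies strictly between $\alpha$ and $\beta$) amounts to $a_v\ge\alpha$ and $b_v\le\beta$ (up to events of probability zero when $\alpha\in S$ or $\beta\in S$). Combining, the conditional event is exactly
\[
\alpha\le \min(X,Y)\le c\le \max(X,Y)\le \beta,
\]
i.e., one of $X,Y$ lies in $[\alpha,c]$ and the other in $[c,\beta]$.

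I would then evaluate this probability using symmetry: the two (disjoint up to measure zero) cases $X\in[\alpha,c],\,Y\in[c,\beta]$ and $Y\in[\alpha,c],\,X\in[c,\beta]$ each have probability $(c-\alpha)(\beta-c)$, since $X,Y$ are independent and uniform on an interval of length $1$. Summing gives $2(c-\alpha)(\beta-c)$, and multiplying by the segment probability $p_{t^*}=p_{t(c)}$ yields
\[
\Pr[c\in A(v)\land A(v)\cap S=\varnothing]=2\,p_{t(c)}\,(c-\lefty{c}{S})\,(\righty{c}{S}-c),
\]
as claimed. The only subtlety, and the one place where care is needed, is the boundary bookkeeping between the cases $\alpha=z^-_{t^*}$ (and $\beta=z^+_{t^*}$) versus $\alpha,\beta\in S$; in the former case the inequality $a_v\ge\alpha$ holds trivially, while in the latter the inequality must be strict but this distinction disappears because $\{X=\alpha\}$ and $\{Y=\beta\}$ have probability zero under the uniform law.
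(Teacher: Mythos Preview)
Your proposal is correct and follows essentially the same approach as the paper: condition on the segment being $I_{t(c)}$, identify the joint event with $\lefty{c}{S}<a_v\le c\le b_v<\righty{c}{S}$, and compute the probability as $2p_{t(c)}(c-\lefty{c}{S})(\righty{c}{S}-c)$ using uniformity and the symmetry of $X,Y$. You are a bit more explicit than the paper about where the factor~$2$ comes from and about the strict-versus-weak inequality bookkeeping at the endpoints, but the argument is the same.
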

\begin{proof}
    A necessary condition for $v$ to approve $c$ is that both endpoints of $v$'s interval should be drawn from $I_{t(c)}$; moreover, one of them should come from $[z_{t(c)}^-, c]$, while the other should come from $[c, z_{t(c)}^+]$. 
    Now, suppose that this condition is satisfied, and
    consider the points $\lefty{c}{S}$ and 
    $\righty{c}{S}$; recall that, by definition, $\lefty{c}{S}, \righty{c}{S}\in I_{t(c)}$. For $A(v)\cap S=\varnothing$, it has to be the case  
    that the left endpoint of $v$'s interval is drawn from $[\lefty{c}{S}, c]$, while the right endpoint of her interval is drawn from $[c, \righty{c}{S}]$.
    
    Accordingly, the event that both $c \in A(v)$ and $A(v)\cap S=\varnothing$, i.e., $v$ approves $c$ but disapproves each $c'\in S$,  
    is identical to the event ${\lefty{c}{S}<a_v\leq c\leq b_v < \righty{c}{S}}$, 
    and we have 
    $$
        \Pr\left[c\in A(v)\land A(v)\cap S=\varnothing\right]=\\2p_{t(c)}\cdot (F_{t(c)}(c)-F_{t(c)}(\lefty{c}{S}))(F_{t(c)}(\righty{c}{S})-F_{t(c)}(c)).
    $$
    As we assume that the RIV instance is uniform, the result follows.
\end{proof}

\subsection{Queries}
We allow two types of queries: point queries and interval queries. A {\em point query} \textsc{Point}$(x,v)$ asks voter $v$ whether they approve candidate position $x$; the answer is $\mathds{1}[x\in A(v)]$. An {\em interval query} \textsc{Int}$(x, y, v)$ communicates two candidate or segment endpoint positions $x,y$ to $v$, who responds with $\mathds{1}[A^*(v)\subseteq [x,y]]$\footnote{\point$(x)$ is equivalent to $\neg\interval(0,x)\land\neg\interval(x,1)$; in other words, point queries are redundant with respect to interval queries, and are included only as syntactic sugar.}.
We refer to a sequence of queries of both types, together with voters' responses, as a {\em dialogue}; e.g., a dialogue can take the form 
${\mathcal I} = (c_1\in A(v), c_2\not\in A(v'), A^*(v'')\not\subseteq[c_3, c_4])$.

{\sc Point} queries allow us to obtain information about the approval of a single candidate, while {\sc Interval} queries allow us to effectively ask a voter ``Would you only approve of some compromise between $x$ and $y$?''
Note that we only submit {\sc point} queries at candidate positions: we may only ask the voters about actual candidates, and not about arbitrary positions on the real line. With interval queries, we additionally allow queries parametrized by the endpoints of segments. 

We note that under CI preferences each voter's ballot is fully determined by her leftmost and rightmost approved candidates; thus, there are at most 
$m^2+1$ distinct ballots. Consequently, if we consider a richer elicitation framework in which each voter can directly communicate which of the possible $m^2+1$ ballots she holds, we could elicit full information regarding the voters' preferences using $\log \left(m^2+1\right)=\mathcal{O}(\log m)$ queries per voter. In what follows, we show that we can elicit full preferences with $\mathcal{O}(\log m)$ queries in our model as well (\Cref{sec:obt}); moreover, 
it turns out that, with a much smaller number of queries (which, crucially, is independent of $m$), we can usually find a PJR+ outcome (\Cref{sec:pjr}). 


\section{Eliciting Full Preferences}
\label{sec:obt}
By asking a voter about every candidate in $C$, we can elicit her full preferences using $\Oh(m)$ {\point} queries. We will now show that,
in the 1-dimensional RIV model we can reduce this per-voter upper bound to $\Oh(\log m)$ in expectation. Importantly, our algorithm uses both {\point} and {\interval} queries.

We formalize the problem of eliciting a voter's preferences regarding a candidate subset $P\subseteq C$ as follows. 

\begin{definition}
We say that a voter $v$ is {\em resolved for $P\subseteq C$} given a dialogue
$\mathcal{I}$ if\,\, $\Pr[p\in A(v)\mid \mathcal{I}]\in \{0,1\}$ for all $p\in P$. We drop $\mathcal{I}$ from the notation when it is clear from the context.
\end{definition}
Our algorithm for resolving a given voter $v$ (\Cref{alg:perfectinfogeneric})
proceeds in two stages: the first stage is used to determine the segment $I_t$
containing $v$ (\Cref{alg:binsearch}), whereas the second stage is used to find $v$'s approval interval within $I_t$.

The first stage uses binary search, and is implemented via interval queries. Say $T$ is a union of segments. We
partition $T$ into two (nearly-)equal sized intervals $L,U$, where $L$ is the first $\floor{|T|/2}$ segments and $U=T\setminus L$,
and perform an interval query with $x,y=\min,\max L$ resp. We set $T:=L$ if the query response is positive and $T:=U$ if it is negative, and recurse. We stop when $T$ is a single segment; at that point, we check if $v$ lies in this unique interval $I_t$. If so, we output $t$, and if not, we know that $v$ did not lie in any of the original $T$ segments. As we essentially halve $T$ at each step, this stage is guaranteed to terminates after $\Oh(\log |T|)=\Oh(\log (\min(|P|,\sigma )))$ queries.

During the second stage, 
given $t=t(v)$, the algorithm
tries to find some candidate $c\in I_t\cap P$ with $c\in A(v)$, and then it runs two binary searches to determine the endpoints of $A(v)$.

To find some such $c$, it performs $\Oh(\log |P|)$ rounds of queries with exponentially increasing ``resolution''. 
Specifically, in round $i$, it submits a {\point} query on the candidates to the left and to the right of the position $t+j/2^i$ for $j\in[2^i]$. When some such query is answered positively, it yields a position in the voter's approval interval. The algorithm then performs two binary searches on either side of this position to find the candidates $\min(P\cap A(v))$ and $\max (P\cap A(v))$. Note that, since the algorithm did not terminate in the previous round (i.e., for $j-1$), it suffices to limit these binary searches to $\left[P\cap \left(t+\frac{j-1}{2^i},t+\frac{j}{2^i}\right)\right]$ and
$\left[P\cap \left(t+\frac{j}{2^i},t+\frac{j+1}{2^i}\right)\right]$.

After $\log |P|$ rounds, if a voter does not approve of any {\point} query it receives, the algorithm ``gives up'' and submits {\point} queries for all candidates in $P$ (the final two lines of \Cref{alg:perfectinfogeneric}). However, this only happens with low probability, so with high probability---and also in expectation---the algorithm uses $\Oh(\log |P|)$ queries.
\begin{algorithm}[t]
\SetKwFunction{FBinS}{SegmentSearch}
\SetKwProg{Fn}{Function}{:}{}
\Fn{\FBinS{$v$, $T$}}{
Sort $T$\;
lower$\gets0$; upper$\gets |T|-1$\;
\While{$\text{\em upper}>\text{\em lower}$}{
    
    mid$\gets \floor{(\text{upper}+\text{lower})/2}$\;
    \uIf{$\interval(T[\text{\em lower}],T[\text{\em mid}]+1, v)$ is approved}{
        upper$\gets$mid\;
    } \uElse{
        lower$\gets$mid+1\;
    }
}
\If{$\interval(T[\text{\em lower}],T[\text{\em lower}]+1)$ is approved}{\KwRet{lower}}
\Else{\KwRet{$\bot$}}

}
\caption{Finding which segment a voter appears in. $T$ is a possibly proper subset of $[\sigma]$, so $v$ may not be found in any segment of $T$.}
\label{alg:binsearch}
\end{algorithm}
\begin{algorithm}[t]
\SetKwFunction{FQuery}{resolve}
\SetKwFunction{FBinS}{SegmentSearch}
\SetKwProg{Fn}{Function}{:}{}
\Fn{\FQuery{$v$, $C$, $P$}}{
$t\gets \FBinS(v,\{t\in[\sigma]:P\cap I_t\neq \varnothing\})$\;
\lIf{$t$ is not found}{\KwRet{$\varnothing$}}
\For{$i\gets 1$ to $\ceil{\log (|P|)}$}{
    \For{$j\gets 1$ to $2^i$}{
    \textsc{Point}$\left(\lefty{\left(t+\frac{j}{2^i}\right)}{C}, v\right)$\; \textsc{Point}$\left(\righty{\left(t+\frac{j}{2^i}\right)}{C}, v\right)$\;
    \If{$v$ approves either point query}{
        Perform binary search point queries on
        $P\cap (t+\frac{j}{2^i},t+\frac{j+1}{2^i})$ and $P\cap (t+\frac{j-1}{2^i},t+\frac{j}{2^i})$ to find $\alpha = \min(P\cap A(v)), \beta=\max (P\cap A(v))$\;
        \Return $P\cap [\alpha, \beta]$\;
    }
    \textsc{Int}$\left(\left(\lefty{\left(t+\frac{j}{2^i}\right)}{C},\righty{\left(t+\frac{j}{2^i}\right)}{C}\right), v\right)$\;
    \lIf{$v$ approves this query}{
    \Return $\varnothing$ 
    }
    }
    
}
\tcp{Only occurs if not already returned}
\lFor{$c \in P\cap I_t$}{
\textsc{Point}$(c, v)$
}
\Return all approved points in $P$\;
}
\caption{Resolving a voter for a candidate set $P$}
\label{alg:perfectinfogeneric}
\end{algorithm}
\begin{example}
	We now describe how our algorithm resolves a single voter $v$. Suppose we have 8 segments, and $v$'s interval is $(7.1,7.35)\subseteq I_7$. The candidate positions are illustrated in \Cref{fig:querying-example}.
    \begin{figure}[t]
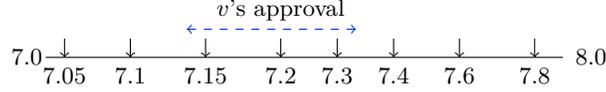

        \centering
        \ctikzfig{querying_example_nonscale}
        \caption{An example positioning of candidates (black arrows) and a voter's approval (blue dotted arrow) over the segment (not to scale)}
        \label{fig:querying-example}
    \end{figure}
First, we use interval queries to find which segment $I_t$ the voter belongs to. The first query is $\interval (1,5)$, and $v$ responds with ``false''. This rules out intervals $I_1,\dots,I_4$.  The algorithm then queries $\interval (5, 7)$ (the response is ``false'') and determines that $t=7$.
	Suppose now that $P\cap[7, 8]$ consists of the candidates shown by black arrows in \Cref{fig:querying-example}. During the second stage, we start by performing {\point} queries at $(7+1/2)^\pm_C$, i.e., $7.4$ and $7.6$; in both cases, the answer is ``false''. The next query is {sc Int}$(7.4, 7.6, v)$; again, the answer is ``false''. The next $\point$ query is at $(7+1/4)^\pm_C$, i.e., at $7.2$ and $7.3$. The voter responds ``true'' to both, and so we switch to performing two binary searches. We know that $v$ does not approve of $7.4$, and so we do not need to perform a binary search to the right of $7.3$. We perform a binary search to the left of $7.2$. The first {\point} query is $7.1$; the voter responds ``false''. The next query is $\point(7.15, v)$; the response is ``true''. We now have full information about $v$'s approval over $P$, and so we output $\{7.15,7.2,7.3\}$.
\end{example}
\begin{theorem}
\label{thm:querying}
    For every set $P\subseteq C$, given a uniform RIV, \Cref{alg:perfectinfogeneric} resolves a voter $v\in V$ using $\Oh(\log(|P|)$ queries in expectation.
\end{theorem}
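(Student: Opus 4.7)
My plan is to split the analysis of \Cref{alg:perfectinfogeneric} into its four phases---the segment search of \Cref{alg:binsearch}, the exponential scan for an approved position inside $I_{t}$, the two endpoint binary searches, and the brute-force fallback---and bound the expected number of queries contributed by each. The segment search runs on a set of size at most $|P|$, so a textbook binary-search argument gives $\Oh(\log |P|)$ interval queries deterministically; the two endpoint binary searches, which run at most once per invocation of \texttt{resolve}, similarly cost $\Oh(\log |P|)$ queries whenever they are reached. The crux is therefore to control the cost of the exponential scan and the probability of entering the fallback.

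The key structural claim I want to establish is that in round $i$ the scan is guaranteed to terminate whenever $b_v - a_v > 1/2^i$. To prove it, I would fix such a voter, observe that $(a_v, b_v)$ must contain one of the probe positions $t + j/2^i$ (these are spaced by exactly $1/2^i$ across $I_t$ and $[a_v,b_v]$ is strictly longer than that), and then perform a short case analysis on where $\lefty{(t+j/2^i)}{C}$ and $\righty{(t+j/2^i)}{C}$ sit relative to $[a_v, b_v]$. If either of those candidates lies in $[a_v, b_v]$ the corresponding point query is approved; otherwise we must have $\lefty{(t+j/2^i)}{C} < a_v$ and $\righty{(t+j/2^i)}{C} > b_v$, so $A^*(v) \subseteq [\lefty{(t+j/2^i)}{C}, \righty{(t+j/2^i)}{C}]$ and the interval query on these two endpoints returns true. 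In every case, at least one of the three queries issued for this $j$ triggers termination.

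The probabilistic step is then immediate: conditional on the voter's segment, $a_v$ and $b_v$ are the min and max of two i.i.d.\ uniform draws on $[t,t+1]$, so $b_v-a_v$ has density $2(1-z)$ on $[0,1]$ and $\Pr[b_v - a_v \leq \varepsilon] \leq 2\varepsilon$. Hence the probability that the scan reaches round $i$ (that is, fails to terminate in rounds $1,\dots,i-1$) is at most $2^{2-i}$. Round $i$ costs $3 \cdot 2^i$ queries, so the expected total scan cost is at most $\sum_{i=1}^{\lceil \log |P| \rceil} 2^{2-i} \cdot 3 \cdot 2^i = \Oh(\log|P|)$. The fallback is entered only when $b_v - a_v \leq 2^{-\lceil \log |P|\rceil} \leq 1/|P|$, an event of probability at most $2/|P|$, so its $\Oh(|P|)$-query cost contributes only $\Oh(1)$ in expectation; summing over the four phases yields the claimed $\Oh(\log|P|)$ bound. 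The main obstacle I anticipate is pinning down the structural termination claim cleanly, in particular verifying that the measure-zero edge cases---probe positions coinciding with $a_v$, $b_v$, or an existing candidate---can be safely ignored under the uniform RIV distribution.
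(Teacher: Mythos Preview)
Your proposal is correct and follows essentially the same approach as the paper's proof: both decompose into the segment search, the exponential scan, the endpoint binary searches, and the brute-force fallback, and both hinge on the same structural observation that once a probe position $t+j/2^i$ lies in $A^*(v)$, one of the two point queries or the interval query at that $j$ must be approved. The only cosmetic difference is that the paper works with the random variable $\rho'$ (the first round at which a probe point lands in $A^*(v)$) and computes $\Pr[\rho'>r]=2^{-r}$ exactly, whereas you use the slightly coarser sufficient condition $b_v-a_v>1/2^i$ and the bound $\Pr[b_v-a_v\le\varepsilon]\le 2\varepsilon$; this loses a constant factor but yields the same $\Oh(\log|P|)$ conclusion.
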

    A key observation in our proof is that the probability that a voter does not approve of any {\sc Point} query in the first $i$ rounds is $1/2^i$.
    Hence, the probability that the algorithm ends up querying all points in $P$ can be bounded as $1/2^{\log |P|}=1/|P|$, 
    i.e., the expected cost of ``failure'' is $\Oh(1)$.
    Moreover, once a point $c\in A(v)$ has been identified, the two subsequent binary searches cost $O(\log |T|)$.
    We now present a full proof of \Cref{thm:querying}.
\begin{proof}
    Consider a voter $v$, and some set $P\subseteq C$. Let $\ell=\ceil{\log |P|}$. The initial binary search stage to identify $t(v)$ takes $\Oh(\log |P|)=\Oh(\ell)$ queries. Thus, if $t(v)$ is not found, the algorithm terminates after $\Oh(\ell)$ queries. 
    Therefore, from now on we assume that the algorithm successfully identified a value $t(v)\in[\sigma]$.
    
    Let $\rho$ be the random variable which takes value $i$ if \Cref{alg:perfectinfogeneric} terminates in round $i$ ($\rho=\ell+1$ if the algorithm has to query all of $P$). Let $\rho'$ be the smallest value of $i\in[\ell]$ such that $A^*(v)\cap\left\{t+j/2^{i}:j\in[2^{i}]\right\}\neq \varnothing$ (where $\rho'=\ell+1$ if this intersection is empty for all values of $i\in [\ell]$), so $\rho'$ is also a random variable. 
    Consider some $x\in A^*(v)\cap\left\{t+j/2^{\rho'}:j\in[2^{\rho'}]\right\}$. We know that $\rho\leq \rho'$; in round $\rho'$ the algorithm would perform $\point (\lefty{x}{C}, v)$ and $\point(\righty{x}{C}, v)$, and $\interval \left(\lefty{x}{C},\righty{x}{C}, v\right)$. 
    The voter would have to approve of at least one of these three queries; since $x\in A^*(v)$, we have $a_v\leq x\leq b_v$, and so if the respective interval query was not approved, $v$ must approve one of the point queries. 
    Of course, the algorithm may have terminated before reaching round $\rho'$, in which case $\rho<\rho'$, but it is certainly true that $\rho\leq \rho'$. 
    Let $Z$ be the event $\rho=\ell+1$ and $Z'$ the event that $\rho'=\ell+1$.
    Now, by the same argument that showed $\rho\leq \rho'$, we know that $\Pr[Z]\leq \Pr[Z']$; if none of the queries in the first $\ell$ rounds were approved, then it cannot be the case that any of the points $\left\{t+j/2^{\ell}:j\in[2^\ell]\right\}$ would be approved.
    We have that $$\Pr[Z']=\sum_{j\in[2^\ell]}\Pr\left[t+(j-1)/2^\ell<a_v\leq b_v<t+j/2^\ell\right]=\sum_{j\in[2^\ell]}2^{-2\ell}=2^{-\ell}\leq 1 /|P|.$$

    Let $Q$ be the total number of queries used by the algorithm. Note that if $\rho\leq \ell$, then $Q\leq \Oh(\ell)+ 3\times  2^{\rho}+2\log |P|\leq 3\times 2^{\rho'}+\Oh(\ell)$, and if $\rho=\ell +1$, then $Q\leq 3\times 2^{\ell}+|P|\leq 3\times  2^{\rho'}+|P|$. We have
    $
        E(2^{\rho'})=\sum _{r=1} ^{\ell}2^r {\Pr[\rho'=r]} + 2^{\ell+1} \Pr[Z']\leq \ell+ 4
    $
    where $\Pr[{\rho'=r}]=\Pr[\rho' = r\mid \rho'>r-1]\Pr[{\rho'>r-1}]=\frac{1}{2}\frac{1}{2^{r-1}}$ and hence $\Pr[\rho'=r]=2^{-r}$. Therefore,
    $$
        E(Q)\leq \Oh(\ell)+ 3 E\left(2^{\rho'} \right) +E\left(\begin{cases}
            |P| & \text{if }\rho=\ell+1\\
            \Oh(\ell) & \text{otherwise}
        \end{cases}\right)\\
        \leq \Oh(\ell)+\Oh(\ell)+|P|\Pr(Z')=\Oh( \log(|P|),
    $$
and hence the result is proved.
\end{proof}

\noindent 
By executing \Cref{alg:perfectinfogeneric} with $P=C$, we can learn all voters' complete preferences
using $\Oh(\log m)$ queries per voter. This enables us to find proportional committees. In particular, since for CI preferences the core is always non-empty, and a committee in the core can be computed in polynomial time given the voters' preferences \cite{Pierczynski2022Core-StableDomains}, we obtain the following corollary.

\begin{corollary}
\label{cor:core}
    Let $E$ be an election with candidate set $C$, $|C|=m$, obtained by sampling the voters' preferences from a uniform RIV. One can compute a  committee in the core of $E$ using $\mathcal{O}(\log m)$ queries per voter in expectation.
\end{corollary}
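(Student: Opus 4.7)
The plan is to combine Theorem~\ref{thm:querying} with the polynomial-time core computation result for CI preferences from~\citet{Pierczynski2022Core-StableDomains}. Concretely, I would iterate over the $n$ voters, invoking \Cref{alg:perfectinfogeneric} with $P = C$ on each voter independently. By Theorem~\ref{thm:querying} each such call uses $\Oh(\log |C|) = \Oh(\log m)$ queries in expectation, and upon returning, the voter is resolved for all of $C$, so $A(v)$ is fully known.

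Once the resolve routine has been executed on every voter, the algorithm possesses the complete approval ballot of the entire electorate. Because the ballots are drawn from a uniform RIV, every ballot has the form $A^*(v) \cap C$ for an interval $A^*(v) \subseteq \mathbb{R}$, so the resulting election has Candidate-Interval preferences. I would then hand off the reconstructed election to the polynomial-time algorithm of \citet{Pierczynski2022Core-StableDomains}, which outputs a committee in the core for any CI election; this step introduces no further queries.

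For the expected-query bound, it suffices to observe that expectations are additive and that the $\Oh(\log m)$ bound from \Cref{thm:querying} is a per-voter guarantee. Since the per-voter sampling process in a (uniform) RIV is independent across voters, and the queries submitted while resolving one voter do not depend on responses from the others, the per-voter expectation of $\Oh(\log m)$ carries over directly, yielding the claim. There is no real technical obstacle: the only care needed is to verify that \Cref{alg:perfectinfogeneric} applied with $P = C$ indeed resolves $v$ for the entire candidate set, which is immediate from the definition of \emph{resolved for $P$}.
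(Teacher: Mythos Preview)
Your proposal is correct and follows essentially the same approach as the paper: run \Cref{alg:perfectinfogeneric} with $P=C$ to elicit each voter's full ballot using $\Oh(\log m)$ queries in expectation (Theorem~\ref{thm:querying}), then invoke the polynomial-time core algorithm for CI preferences from~\citet{Pierczynski2022Core-StableDomains}. Your added remarks on linearity of expectation and per-voter independence are sound but not strictly needed, since Theorem~\ref{thm:querying} already gives a per-voter bound.
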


\section{Finding PJR+ Committees for RIV}\label{sec:pjr}
We have seen that if voters' preferences are sampled from a uniform RIV model, one can find a core stable outcome using $\Oh(\log m)$ queries per voter. We will now present our main result: for a slightly less ambitious notion of proportionality, namely, PJR+, one can find proportional outcomes using $\Oh(\log(\sigma k))$ queries per voter in expectation; notably, 
the expected number of queries does not depend on $m$.
In particular, this means that our algorithm for PJR+ does not elicit voters' full preferences; in that sense, it follows the line of work on learning solution concepts in games~\cite{BalcanPZ15,JhaZ23}.

\begin{theorem}
\label{thm:PJR}
   For a uniform RIV, \Cref{alg:PJR} finds a PJR+ committee. When the number of voters $n$ satisfies $n=\Omega(k^2\log m)$, the algorithm uses $\mathcal{O}( \log (\sigma k))$ queries per voter in expectation and with high probability.
\end{theorem}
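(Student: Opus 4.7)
The plan is to split the argument into two orthogonal claims: (a) the algorithm always returns a PJR+ committee (with probability~1), and (b) under $n=\Omega(k^2\log m)$ its per-voter query cost is $\Oh(\log(\sigma k))$ both in expectation and with high probability. I will assume the shape of \Cref{alg:PJR} to be a ``fast path'' that issues queries only on a carefully chosen small anchor set $P\subseteq C$ with $|P|=\Oh(\sigma k)$ (for instance, $\Oh(k)$ evenly-spaced anchors per segment together with the segment endpoints), and a ``backup branch'' that runs \Cref{alg:perfectinfogeneric} with $P=C$ and then invokes MES on the fully resolved profile.

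For (a), correctness is immediate on the backup branch since MES is known to output PJR+ (even EJR+) committees; hence I only need to show that whenever the fast path commits to a committee $W$, that $W$ is already PJR+. The fast path is designed to explicitly check the PJR+ condition restricted to witness sets that can be certified from the anchor-resolved information, and to fall through to MES otherwise. Thus on every random input, either the check passes and $W$ is PJR+ by construction, or MES is invoked and a PJR+ committee is produced. Consequently PJR+ holds with probability~1, not merely with high probability.

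For (b), I bound the two branches separately. By \Cref{thm:querying}, resolving a voter on the anchor set $P$ costs $\Oh(\log|P|)=\Oh(\log(\sigma k))$ queries in expectation, and the per-voter random variables are independent across voters, so a Chernoff bound yields the same order with high probability in the total (here the condition $n=\Omega(k^2\log m)$ gives enough voters for concentration in each of $\sigma$ segments and for each of the $k$ ``deserving'' size thresholds to hold uniformly). The backup branch costs $\Oh(\log m)$ queries per voter (again by \Cref{thm:querying} with $P=C$). To keep the expected cost at $\Oh(\log(\sigma k))$, I will argue that the fast path succeeds except on an event of probability $\Oh(\log(\sigma k)/\log m)$; combined with the $\Oh(\log m)$ cost of the backup, this contributes only $\Oh(\log(\sigma k))$ to the overall expectation. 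The high-probability version follows from the same concentration event.

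The hardest step, and the one that controls the choice of $P$, is showing that the fast path succeeds with high probability. The plan is to use \Cref{lem:appprob} together with a Chernoff-plus-union bound over the $\Oh(m)$ candidates and the $\Oh(k)$ size thresholds to argue that, under $n=\Omega(k^2\log m)$, the empirical frequencies of approval patterns on the anchor set concentrate around their RIV-predicted values up to an additive $\Oh(1/k)$ error. I then need a structural lemma, specific to CI preferences, stating that whenever these empirical frequencies concentrate, every PJR+-critical voter group $G$ is already ``witnessed'' on the anchor set: that is, either $G$ contains an anchor in $\bigcap_{v\in G}A(v)$ (so the resolved information reveals the premise of PJR+), or the candidates needed to satisfy the conclusion can be chosen from anchors within the segment(s) supporting $G$. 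Proving this witnessing lemma is where most of the work lies; my approach is to exploit that each voter's approval interval is contained in a single segment $I_t$, so PJR+ groups essentially decompose across segments, and within each segment a set of $\Oh(k)$ anchors suffices to certify representation up to the $\Oh(1/k)$ concentration error. Once this lemma is in place, correctness of the fast path on the good event, and hence the $\Oh(\log(\sigma k))$ bound both in expectation and with high probability, follow by combining it with the analysis above.
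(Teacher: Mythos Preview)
Your high-level skeleton (always-correct via MES fallback; fast path succeeds w.h.p.) matches the paper, but the two ideas that actually carry the argument are missing, and as written the plan would not close.

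First, you never say what committee the fast path outputs. The algorithm does not ``check PJR+ on the anchor set and output some committee''; it \emph{first} constructs a specific $\W$ deterministically from the RIV parameters and candidate positions alone (no queries), by placing $k_t=\lfloor p_t k\rfloor$ candidates near evenly spaced marked points in each segment. All quantitative bounds exploit the geometry of this particular $\W$: \Cref{lem:distlim} controls the spacing of consecutive members of $\W$, which yields \Cref{lem:combinedbound}, namely that for every $c\notin\W$ and every $(\ell,j)$ the probability a random voter could sit in a PJR+-violating group of type $(\ell,c,j)$ is at most $\ell/k-1/(4k)$. Your ``witnessing lemma'' is a different statement---you want arbitrary PJR+-critical groups to be visible on the anchors---whereas the paper shows that \emph{for this specific $\W$} such groups are unlikely to exist at all, with a $\Theta(1/k)$ gap below the $n\ell/k$ threshold. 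Without committing to a concrete $\W$ and proving a gap lemma for it, you have nothing to concentrate against.

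Second, your target failure probability $\Oh(\log(\sigma k)/\log m)$ is unjustified and of the wrong form. The validate routine over-counts: any voter whose partial information is merely \emph{consistent} with membership in a violating group is counted in $s_{\ell,c,j}$ (or in $u$ if she approved no query). The $1/(4k)$ gap from \Cref{lem:combinedbound}, minus an $\Oh(1/k)$ slack for the finite anchor resolution, survives as a $\Theta(1/k)$ gap in the expectation of $s_{\ell,c,j}/n$; Hoeffding then gives failure probability $\exp(-\Theta(n/k^2))$ per triple, and a union bound over the $\Oh(k^2 m)$ triples $(\ell,c,j)$ gives total failure probability $k^2 m\,\exp(-\Theta(n/k^2))$. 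The hypothesis $n=\Omega(k^2\log m)$ is precisely what drives this below $1/\mathrm{poly}(m)$, so that even the worst-case backup cost contributes $\Oh(1)$ to the expectation. Also, ``PJR+ by construction'' is not a proof of soundness of the fast-path check; the paper separately argues (\Cref{sec:WPJR}) that every member of an actual violating group is counted in some $s_{\ell,c,j}$ or in $u$, hence if all $s_{\ell,c,j}+u<n\ell/k$ then no violation exists.
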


The proof of \Cref{thm:PJR} is based on a sequence of algorithms (\Cref{alg:PJR,alg:validate,alg:possible}).
Briefly, our algorithm first ``guesses'' a committee $\W$; we will argue that $\W$ provides PJR+ with high probability. It then elicits preference information from each voter to obtain an approximation to the voter's true approval window.
Given this information, the algorithm then checks whether $\W$ necessarily provides PJR+. If the check fails (which happens with low probability), the algorithm elicits voters' full preferences, and computes a PJR+ committee using the Method of Equal Shares.   

Our guessing stage proceeds as follows. For each $t\in[\sigma]$, we 
allocate $k_t=\floor{p_t k}$ committee seats to $I_t$. Specifically, we mark $k_t$ points
$\left\{t+\frac{2\xi-1}{2(k_t+2)}:\xi\in[k_t+1]\setminus\{1\}\right\}$. We then
form the set $\W_t$ by selecting $k_t$ candidates from $C_t=C\cap I_t$ that are closest to the marked points. We then set $\W=\bigcup_{t=1}^\sigma\W_t$. Intuitively, in this way we ensure that the selected candidates are approximately uniformly distributed on each segment.
    \begin{figure}[t]
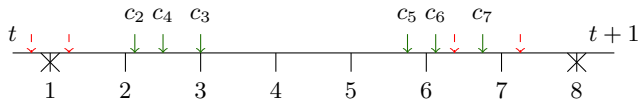

    \ctikzfig{algorithm}
    \caption{The algorithm selects candidates $c$ for the committee $\W_t$, in order $c_2,\dots,c_7$. Unelected candidates are indicated by dotted red arrows.}
    \label{fig:example}
\end{figure}
\begin{example}
We provide an example of how we construct $\W$. Let $\sigma=1$ (an extension to multiple segments is straightforward), and consider candidate positions in \Cref{fig:example} with $k=k_1=6$.
We iterate over marked points from left to right, selecting the candidate that is closest to the marked point.
Among all candidates, $c_2$ has the minimum distance to marked point 2, so we select it. We then pick $c_3$ for marked point 3. For marked point 4, note that $c_3$ was already selected, so the closest unselected candidate is $c_4$. Eventually, we obtain $\W=\{c_2,\dots,c_7\}$, $|\W|=k=6$. We see that the algorithm spreads $k$ candidates across the segment as uniformly as possible.
\end{example}
\begin{algorithm}[p]
    \SetNoFillComment
    \SetKwFunction{FQuery}{resolve}
    \SetKwFunction{FValidate}{validate}
    \SetKwFunction{FPossible}{poss}
    \SetKwFunction{FBinS}{SegmentSearch}
    \SetKwProg{Fn}{Function}{:}{}
    \SetKwInOut{Input}{input}
    \SetKwInOut{Output}{output}
    \Input{A uniform RIV $(p_t)_{t\in [\sigma]}$}
    
    \For{$t\in[\sigma]$}{
    $k_t\gets \floor{p_tk}$\;
    $\W_t\gets \varnothing$\;
    \For{$\xi=2$ to $k_t+1$}{
        $c\gets \arg\min_{c\in C_t\setminus \W_t} \big|c-t-\frac{2\xi-1}{2(k_t+2)}\big|$\;
        $\W_t\gets \W_t\cup \{c\}$\;
    }
    }
    $\W=\bigcup_{t=1}^\sigma \W_t$, sorted\;
    \For{$t\in [\sigma]$}{
        $P_t\gets \W_t\cup\left\{\adj{\left(t+\frac{j}{15 k }\right)}{C_t}{+/-}:j\in[15 k ]\right\}$\;
    }
    \For{$v\in V$}{
        $t\gets \FBinS(v,[\sigma])$\;
        $\widehat{A}(v)\gets \FQuery(v, C, P_t)$\;
    }
    $\widehat{E}\gets\left(V,\bigcup_{t\in [\sigma]}P_t,k,\widehat{A}\right)$\;
    \If{\FValidate($\W,\widehat{E},C$) (Algorithm \ref{alg:validate})}{
        \Output{$\W$}
    }
    \For{$v\in V$}{ 
        $A(v)\gets\FQuery(v, C, C)$\;
    }
    {$E\gets(V,C,k,A)$\;}
    \Output{$W^*=\texttt{MES}(E)$}
    
    \caption{Finding a PJR+ committee}
    \label{alg:PJR}
    \end{algorithm}
    \begin{algorithm}[p]
    \SetNoFillComment
    \SetKwFunction{FValidate}{validate}
    \SetKwFunction{FPossible}{poss}
    \SetKwFunction{FRank}{$\rho$}
    \SetKwProg{Fn}{Function}{:}{}
    \SetKwInOut{Input}{input}
    \SetKwInOut{Output}{output}
    
    \Fn{\FValidate{$\W$, $\widehat{E}$, $C$}}{
    $u\gets|\{v\in V:v\text{ has not approved any query}\}|$\;
    \For{$\ell=1$ to $k$}{
        \For{$c\in C\setminus\W$}{
        $\rho\gets |\{x\in \W:x<c\}|$\;
        \For{$j=0$ to $\ell-1$ }{
            $s_{\ell,c,j}\gets 0$\;
            $R\gets \W[\rho-j:\rho+\ell-1-j]$\;
            \For{$v\in V$ that approved some query}{
                \If{\FPossible$(v,c,\W\setminus R)$ (Algorithm \ref{alg:possible})}{
                    $s_{\ell,c,j}\gets s_{\ell,c,j}+1$\;
                    \If{$s_{\ell,c,j}+u\geq \frac{n\ell}{k}$}{
                        \KwRet{\emph{False}}\;
                    }
                }
            }
        }
        }
    }
    \KwRet{\emph{True}}\;
    }
    \caption{Validating if a committee $\W$ provides PJR+}
    \label{alg:validate}
    \end{algorithm}
    \begin{algorithm}[p]
    \SetNoFillComment
    \SetKwFunction{FPossible}{poss}
    \SetKwProg{Fn}{Function}{:}{}
    \SetKwInOut{Input}{input}
    \SetKwInOut{Output}{output}
    \Fn{\FPossible{$v$, $c$, $S$}}{
    \If{$c\in S$}{\KwRet{False}}
    $A\gets$ set of {\sc Point} query positions $v$ has approved\;
    $D\gets$ set of {\sc Point} query positions $v$ has disapproved\;
    $\phi_2\gets \min A$\;
    $\phi_3\gets\max A$\;
    $\phi_1\gets \lefty{\left(\phi_2\right)}{D}$\;
    $\phi_4\gets \righty{\left(\phi_3\right)}{D}$\;
    \KwRet{$(\phi_1,\phi_2]\cap(\lefty{c}{S},c]\neq \varnothing\land[\phi_3,\phi_4)\cap[c,\righty{c}{S})\neq \varnothing$}\;        
    }
    \caption{Determining if $v$ has positive probability of approving $c$ and disapproving $S$.}
    \label{alg:possible}
    \end{algorithm}
Given our guessed committee $\W$, we verify whether it provides PJR+. For each voter $v\in V$, we use {\tt SegmentSearch} to find the segment $I_t$ that contains $v$. We then pick points spaced approximately $1/15k$ apart along the segment, along with $\W$, and use \Cref{alg:perfectinfogeneric} to resolve each voter on the candidates closest to these points. This gives the algorithm an approximation for the voter's approval interval.
If the voter did not approve any query, we add them to the counter $u$, as we do not have much information about the location of the voter's approval interval, except that it must lie entirely between two adjacent {\sc Point} queries.

If the voter did approve some query, we check for every $\ell\in [k]$, every candidate $c\in C\setminus \W$ and for every $0\leq j < \ell$ whether it could be consistent, given the information we have elicited from the voter, for the voter to approve of: (1) $c$, (2) at most $j$ candidates in $\W$ to the left of $c$, and (3) fewer than $\ell$ candidates in $\W$.
If this event has positive probability, captured by the \texttt{poss} function, we add the voter to the counter $s_{\ell,c,j}$. Finally, if $s_{\ell,c,j}+u< n\ell/k$ for all $\ell\in[k]$, $c\in C\setminus\W$, $0\le j<\ell$, we output $\W$. Otherwise, we query all voters in $V$ about all candidates in $C$ to get full information about $E$, and then run MES to output a committee $W^*$ that provides PJR+.

In \Cref{sec:WPJR} we show that if $s_{\ell,c,j}+u< n\ell/k$ for all $\ell\in[k]$, $c\in C\setminus\W$, $0\le j<\ell$, then $\W$ is guaranteed to provide PJR+. 
Otherwise, we use $\mathcal{O}(m)$ queries per voter in order to obtain full information about $A(v)$ for each $v\in V$. However, we show that the latter event is rare, so with high probability and in expectation we require $\mathcal{O}(\log \sigma k)$ queries.
The time complexity of \Cref{alg:PJR} (without MES, which runs in $\Oh(m^2n\log n)$ \citep{Peters2021ProportionalUtilities}) is $\Oh(nmk^2)$.

To prove \Cref{thm:PJR}, we require additional notation; we first discuss it informally, and then provide rigorous definitions.

The quantities $\W,\W_t,s_{\ell,c,j}$ are defined in Algorithms~ \ref{alg:PJR} and~\ref{alg:validate}
(in what follows, we consider the values of these variables at the termination of the respective algorithms).

Assume that all $\W_t$ are ordered from left to right. 
The function $\rho(c)$ counts the candidates in $\W_{t(c)}$ to the left of $c$, so that $\W_{t(c)}[\rho(c) -r]$ is the $r$-th candidate in $\W_{t(c)}$ to the left of $c$. 
The set $K_t$ contains the candidates in the subinterval of $I_t$ bounded by the first and last marked points (e.g., marked points 2 and 7 in \Cref{fig:example}). 
For each $c\in K_t$, there exists at least one marked point to the left and to the right of $c$, so we can prove probability bounds on $c$; 
we deal with candidates in $C_t\setminus K_t$ separately. 
We use $d^{+/- }_r (c)$ to denote the distance between $c$ and the $(r+1)$-th candidate in $\W_{t(c)}$ to the right/left of $c$, respectively (if there are fewer than $r+1$ candidates to the left/right of $c$ in $\W_{t(c)}$, we measure the distance to the respective endpoint of $I_{t(c)}$ instead); see \Cref{fig:distanceexample} for an illustration. 
Finally, $Y_{\ell,c,j}$ is the event that a voter approves candidate $c$, at most $j$ candidates in $\W$ to the left of $c$, and strictly fewer than $\ell$ candidates in $\W$ in total. 
Now we provide formal definitions.

\begin{definition}
\label{def:bigdef}
     Set 
     \begin{align*}
     \rho(c) & =|\{c'\in \W_{t(c)}:c'<c\}|, \\  K_t &=C_t\cap\left(t+\frac{3}{2(k_t+2)},t+1-\frac{3}{2(k_t+2)}\right).
     \end{align*}
      For $0\leq j <\ell\leq k$ and $c\in C\setminus \W$, define the random event 
      $Y_{\ell,c,j}$ as \[ c \in A(v)\text{ and }
        A(v)\cap \W \subseteq \W_{t(c)}[\rho(c)-j:\rho(c)+\ell-1-j].
        \]  
    Let $\pi(\ell,c,j)=\Pr[Y_{\ell,c,j}]$ and 
    $$
    d^+_{r}(c)=\W_{t(c)}[\rho(c)+r]-c, \quad 
    d^-_{r}(c)=c-\W_{t(c)}[\rho(c)-r-1].
    $$
\end{definition}

\begin{figure}[t]
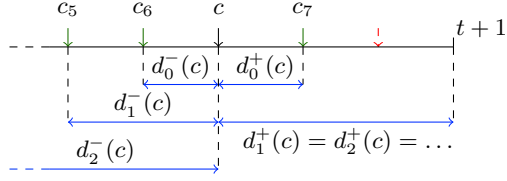


    \centering
    \ctikzfig{algorithmexample}
    \caption{The quantity $d^{+/-}_r (c)$ is the distance from candidate $c\in C\setminus \W$ to the $(r+1)$-th candidate in $\W$ to the left/right of $c$, or to the segment endpoints if no such $\W$ candidate exists.}
    \label{fig:distanceexample}
\end{figure}
We can express $\pi(\ell,c,j)=\Pr[Y_{\ell,c,j}]$ in terms of $d^{+/-} _r(c)$, and bound $d^+ _{r_1}(c)+d^-_{r_2}(c)$ to obtain a bound on $\pi$ in terms of $\ell$ and $k$.
\begin{lemma}
\label{lem:dlem}
    $\pi(r_1+r_2+1,c,r_1)=2p_{t(c)}\cdot d^- _{r_1}(c)\cdot d^+ _{r_2}(c).
    $
 \end{lemma}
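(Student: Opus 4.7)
The plan is to rewrite the event $Y_{r_1+r_2+1,\, c,\, r_1}$ as an event of the form ``$v$ approves $c$ but disapproves every candidate in a particular set $S$'', and then invoke \Cref{lem:appprob} directly. First I would unpack the definition: the event $Y_{r_1+r_2+1,c,r_1}$ asserts that $c\in A(v)$ and that $A(v)\cap\widehat W$ is contained in the window $\widehat W_{t(c)}[\rho(c)-r_1:\rho(c)+r_2]$. Because a voter's approval interval $A^*(v)$ is sampled entirely inside a single segment $I_t$, the event $c\in A(v)$ forces $A(v)\subseteq I_{t(c)}$, and hence $A(v)\cap\widehat W\subseteq \widehat W_{t(c)}$ automatically. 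So the constraint really is that $v$ avoids every element of
\[
S \;=\; \widehat W_{t(c)}\setminus \widehat W_{t(c)}[\rho(c)-r_1:\rho(c)+r_2] \;\cup\; \bigl(\widehat W\setminus \widehat W_{t(c)}\bigr),
\]
that is, $v$ must not approve any $\widehat W_{t(c)}$-candidate strictly more than $r_1$ positions to the left of $c$ or strictly more than $r_2$ positions to the right, and (vacuously) no candidate in other segments.

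Next I would apply \Cref{lem:appprob} to obtain
\[
\Pr[c\in A(v)\land A(v)\cap S=\varnothing]\;=\;2\,p_{t(c)}\,(c-\lefty{c}{S})(\righty{c}{S}-c),
\]
and then identify $\lefty{c}{S}$ and $\righty{c}{S}$ explicitly. On the left, the closest element of $S$ at or below $c$ inside $I_{t(c)}$ is $\widehat W_{t(c)}[\rho(c)-r_1-1]$ when this index is nonnegative; elements of $S$ from other segments all lie below $z^-_{t(c)}$ by disjointness of segments, so they are dominated by $z^-_{t(c)}$ in the max. Hence $\lefty{c}{S}=\widehat W_{t(c)}[\rho(c)-r_1-1]$ when that entry exists, and $\lefty{c}{S}=z^-_{t(c)}$ otherwise. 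Either way, $c-\lefty{c}{S}=d^-_{r_1}(c)$ by the convention in \Cref{def:bigdef}. The symmetric argument shows $\righty{c}{S}-c=d^+_{r_2}(c)$. Substituting gives the claimed formula.

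The only subtle step is the edge-case bookkeeping when $\rho(c)-r_1-1<0$ or $\rho(c)+r_2\geq|\widehat W_{t(c)}|$; this is where we must be careful to verify that the $\{z^-_{t(c)}\}$ (resp.\ $\{z^+_{t(c)}\}$) fallback in the definition of $\lefty{}{}$ (resp.\ $\righty{}{}$) is exactly matched by the segment-endpoint fallback built into $d^\pm_r(c)$, and that out-of-segment forbidden candidates cannot take precedence. I expect this to be the main obstacle, but it is a short case check rather than a deep argument.
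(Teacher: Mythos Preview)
Your proposal is correct and follows essentially the same route as the paper's proof: both reduce the event $Y_{r_1+r_2+1,c,r_1}$ to ``approve $c$ and disapprove every candidate in some set $S$'', invoke \Cref{lem:appprob}, and then identify $c-\lefty{c}{S}=d^-_{r_1}(c)$ and $\righty{c}{S}-c=d^+_{r_2}(c)$, with the same edge-case check when the slice index runs off the end of $\widehat W_{t(c)}$. The only cosmetic difference is that the paper drops the out-of-segment candidates from $S$ at the outset (noting that $c\in A(v)$ already forces $A(v)\cap(\widehat W\setminus\widehat W_{t(c)})=\varnothing$), whereas you keep them in $S$ and argue they cannot affect $\lefty{c}{S}$ or $\righty{c}{S}$; both handle the point correctly.
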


\begin{proof}
 $\pi(r_1+r_2+1,c,r_1)$ is the probability that $v$ approves of $c$ and at most $r_1$ candidates in $\W$ to the left of $c$ and at most $r_2$ candidates to the right of $c$. We only need to consider $\W_{t(c)}$ (the candidates of $\W$ in the same segment as $c$),  not all of $\W$, since no voter can approve $c$ and some candidate in $\W\setminus\W_{t(c)}$.
 We have
$
A(v)\cap \W\subseteq \W_{t(c)}[\rho(c)-r_1:\rho(c)+r_2]$ if and only if $A(v) \cap (\W_{t(c)}\setminus \W_{t(c)}[\rho(c)-r_1:\rho(c)+r_2])=\varnothing.
$ Let $S=\W_{t(c)}\setminus \W_{t(c)}[\rho(c)-r_1:\rho(c)+r_2]$.
Then, with \Cref{lem:appprob}, we have
$$
    \Pr[c\in A(c)\land A(v) \cap S)=\varnothing]
=\\2p_{t(c)}(c-\lefty{c}{S})(\righty{c}{S}-c)=2p_{t(c)}\cdot d_{r_1}^-(c)\cdot d_{r_2}^+(c);
$$
note that if $\rho(c)-r_2<1$, then we say $\lefty{c}{S}=t(c)$, and similar for $\rho(c)+r_2\geq k_{t(c)}$, we have $\righty{c}{S}=t(c)+1$.
 \end{proof}

Given this result, we can transform a bound on $d^{+/-}_r$ into a bound on $\pi$. We first provide a bound on $d^{+} _{r_1} (c)+d^{-} _{r_2} (c)$.
\begin{lemma}
\label{lem:distlim}
    $
    d^+ _{r_1}(c)+d^- _{r_2}(c)\leq \frac{2(r_1+r_2+1)}{k_{t(c)}+2}
    $ for $c\in K_t\setminus \W_t$.
\end{lemma}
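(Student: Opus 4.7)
The plan is to exploit the greedy nature of the algorithm. Since $c\in C_t\setminus\W_t$, the point $c$ is still in the pool whenever a marked point is processed, so the candidate $c(j)$ selected for the $j$-th marked point $m_j$ (reindexing so that $j\in\{1,\ldots,k_t\}$ and $m_j=t+\frac{2j+1}{2(k_t+2)}$, with spacing $m_{j+1}-m_j=\frac{1}{k_t+2}$) must satisfy $|c(j)-m_j|\leq|c-m_j|$.

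First I would show that marked points lying to the left (resp.\ right) of $c$ produce committee members lying to the left (resp.\ right) of $c$. Indeed, if $m_j<c$, then $c(j)$ belongs to the disk of radius $c-m_j$ around $m_j$, namely $[2m_j-c,\,c]$; together with $c(j)\neq c$ this yields $c(j)<c$. The algorithm establishes a bijection between the $k_t$ marked points and the $k_t$ members of $\W_t$, so the number of committee members on each side of $c$ equals the number of marked points on that side. Because $c\in K_t$ lies strictly between $m_1$ and $m_{k_t}$, there is an index $j\in\{1,\ldots,k_t-1\}$ with $c\in(m_j,m_{j+1})$ and $\rho(c)=j$.

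In the main case where $j-r_2\geq 1$ and $j+r_1+1\leq k_t$, I would then look at the $r_1+1$ committee members $c(j+1),\ldots,c(j+r_1+1)$ coming from marked points strictly to the right of $c$. Each $c(j+s)$ lies in $[c,\,2m_{j+s}-c]$, so their maximum is at most $2m_{j+r_1+1}-c$. On the other hand, this maximum is the max of $r_1+1$ distinct members of $\W_t$ located to the right of $c$, so it is at least $w_{j+r_1}$, the $(r_1+1)$-th smallest member of $\W_t$ to the right of $c$. Hence $w_{j+r_1}-c\leq 2(m_{j+r_1+1}-c)$, and symmetrically $c-w_{j-r_2-1}\leq 2(c-m_{j-r_2})$. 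Adding and using the uniform spacing $m_{j+r_1+1}-m_{j-r_2}=\frac{r_1+r_2+1}{k_t+2}$ gives
\[d^+_{r_1}(c)+d^-_{r_2}(c)\leq \frac{2(r_1+r_2+1)}{k_t+2}.\]

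The main obstacle is the boundary cases, in which $j+r_1+1>k_t$ and/or $j-r_2<1$, so the relevant member of $\W_t$ does not exist and $d^+_{r_1}(c)$ or $d^-_{r_2}(c)$ is instead the distance from $c$ to an endpoint of $I_t$. In the two-sided boundary case, the sum equals the length of $I_t$, which is $1$; the two boundary conditions force $k_t\leq r_1+r_2$, so $k_t+2\leq 2(r_1+r_2+1)$ and the claim holds. In the one-sided case (say only $j+r_1+1>k_t$), I would keep the non-boundary argument on the left and bound the remaining distance by $(t+1)-c<(t+1)-m_{k_t}+(m_{j+1}-m_j)\cdot(k_t-j)$, using $(t+1)-m_{k_t}=\frac{3}{2(k_t+2)}$; the resulting arithmetic collapses to $2(k_t-j)+1\leq 4r_1$, which is satisfied because the boundary condition itself enforces $r_1\geq k_t-j\geq 1$.
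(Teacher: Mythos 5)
Your proof is correct and follows essentially the same route as the paper's: greedy selection forces the candidate chosen for each marked point to lie in the reflection of $c$ about that point, and summing the reflection bounds over the uniformly spaced marked points gives the claim; your observation that $\rho(c)$ \emph{equals} the number of marked points to the left of $c$ (rather than the paper's one-sided inequality) is a clean strengthening that lets you drop one of the paper's sub-cases. One caution in the one-sided boundary case: bounding $(t+1)-c$ via $c>m_j$ and separately bounding $c-m_{j-r_2}$ via $c<m_{j+1}$ only yields $2(k_t-j)+3\le 4r_1$, which fails when $r_1=k_t-j=1$; to reach your stated target $2(k_t-j)+1\le 4r_1$ you must first rewrite the sum as $(t+1)+c-2m_{j-r_2}$ and apply the single bound $c<m_{j+1}$.
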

\begin{figure}[t]
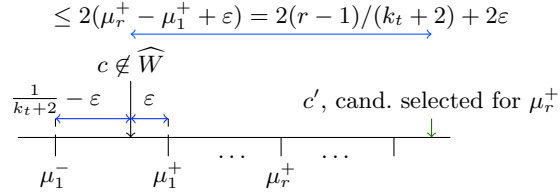


    \centering
    \ctikzfig{distintuiti}
    \caption{Intuition for \Cref{lem:distlim}. The candidate selected for a marked point $\mu^+_r$ cannot be further from $\mu^+_r$ than a candidate $c\not \in W$, as otherwise $c$ would have been chosen instead.}
    \label{fig:distlimint}
\end{figure}
The full proof of this result can be found in \Cref{sec:distlimproof}. The general intuition for this result can be seen in \Cref{fig:distlimint} and is as follows: recall that in each segment $I_t$, we select $k_t$ candidates, according to their distance to the marked points $t+\frac{2\xi -1}{2(k_t+1)}$. Let $c'$ be the candidate selected for the $r$-th marked point $\mu_r^+$ to the right of $c$. Given that $c\not\in \W_{t}$, we know that $|c'-\mu_r^+|\leq |c-\mu^+_r|$, as otherwise $c$ would have been chosen by $\mu ^+_r$ instead.
This allows us to upper bound $d_{r_1}^+(c)$ as $d_{r_1}^+(c)\leq\frac{2r_1}{k_t+2}+2\varepsilon$, where $\varepsilon=\mu^+_1-c$. We can derive a similar bound on $d_{r_2}^-(c)$, namely, $d_{r_2}^-(c)\leq \frac{2r_2}{k_t+2}+2\left(\frac{1}{k_t+2}-\varepsilon\right)$, giving us the desired result. 

With \Cref{lem:dlem}, we can bound $\pi(\ell,c,j)$, the probability that a voter approves $c$, at most $j$ candidates in $\W_t$ to the left of $c$, and fewer than $\ell$ in total for all $c\in C_t\setminus \W$. 
\begin{corollary}
\label{cor:prob}
    For all $c\in K_t\setminus \W$ we have
    $\pi(\ell,c,j)\leq \frac{2p_{t}\ell^2}{(k_t+2)^2}.$
\end{corollary}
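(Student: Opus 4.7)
The plan is to combine Lemma \ref{lem:dlem} (which gives an exact expression for $\pi(\ell,c,j)$ as a product of two distances) with Lemma \ref{lem:distlim} (which bounds their sum) via the AM--GM inequality.

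First, I would unpack the definitions. Fix $c \in K_t \setminus \W$ and recall $t(c)=t$. Set $r_1 = j$ and $r_2 = \ell - 1 - j$, so that $r_1 + r_2 + 1 = \ell$. Then by Lemma \ref{lem:dlem},
\[
\pi(\ell, c, j) \;=\; \pi(r_1 + r_2 + 1, c, r_1) \;=\; 2 p_t \cdot d^-_{j}(c) \cdot d^+_{\ell - 1 - j}(c).
\]

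Next, I would apply Lemma \ref{lem:distlim} with the roles $r_1 \leftarrow \ell - 1 - j$ and $r_2 \leftarrow j$, giving
\[
d^+_{\ell-1-j}(c) + d^-_{j}(c) \;\leq\; \frac{2(\ell - 1 - j + j + 1)}{k_t + 2} \;=\; \frac{2\ell}{k_t + 2}.
\]
Then AM--GM yields
\[
d^-_{j}(c) \cdot d^+_{\ell - 1 - j}(c) \;\leq\; \left(\frac{d^-_{j}(c) + d^+_{\ell - 1 - j}(c)}{2}\right)^{\!2} \;\leq\; \left(\frac{\ell}{k_t + 2}\right)^{\!2}.
\]
Substituting this back gives $\pi(\ell, c, j) \leq \frac{2 p_t \ell^2}{(k_t+2)^2}$, which is exactly the claim.

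There is no real obstacle here: both nontrivial ingredients are already established as Lemma \ref{lem:dlem} and Lemma \ref{lem:distlim}, and the corollary follows by a one-line AM--GM estimate. The only thing to be careful about is bookkeeping of the indices $r_1, r_2$, so that applying Lemma \ref{lem:distlim} with the correct choice yields a bound on precisely the sum $d^-_j(c) + d^+_{\ell-1-j}(c)$ that appears in the product from Lemma \ref{lem:dlem}.
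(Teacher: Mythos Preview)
Your proof is correct and is exactly the argument the paper has in mind: combine Lemma~\ref{lem:dlem} with Lemma~\ref{lem:distlim} and apply AM--GM to pass from the sum bound to the product bound. The index bookkeeping is handled correctly.
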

\begin{lemma}
\label{lem:combinedbound}
    For all $c\in C_t\setminus \W$ we have
    $\pi(\ell,c,j)\leq \ell/k-1/4k.$
\end{lemma}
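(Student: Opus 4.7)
The plan is to combine two bounds on $\pi(\ell,c,j)$: the trivial bound
$\pi(\ell,c,j)\leq \Pr[c\in A(v)]=2p_t(c-t)(t+1-c)\leq p_t/2$
(by AM-GM on $(c-t)+(t+1-c)=1$), and the Corollary~\ref{cor:prob} bound
$\pi(\ell,c,j)\leq 2p_t\ell^2/(k_t+2)^2$, which only applies for $c\in K_t$. A key numerical input is $p_t k < k_t+1$, coming from $k_t=\floor{p_t k}$. Writing the target as $(4\ell-1)/(4k)$, I would split into two regimes based on how $\ell$ compares to $k_t$.

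For $\ell\geq(k_t+2)/2$, the trivial bound together with $p_t<(k_t+1)/k$ gives $\pi<(k_t+1)/(2k)$, and the hypothesis implies $4\ell-1\geq 2k_t+3>2(k_t+1)$, so $\pi<(4\ell-1)/(4k)$. For $\ell<(k_t+2)/2$, integrality forces $k_t\geq 2\ell-1$, and I would split further by whether $c\in K_t$ or $c\in C_t\setminus K_t$. For $c\in K_t$, Corollary~\ref{cor:prob} and $p_t<(k_t+1)/k$ reduce the claim to the polynomial inequality $8(k_t+1)\ell^2\leq(4\ell-1)(k_t+2)^2$; viewing the difference as a convex quadratic in $k_t$ (second derivative $2(4\ell-1)>0$) and checking that its value $12\ell^2-1$ and derivative $8\ell^2+4\ell-2$ at $k_t=2\ell-1$ are both positive, the inequality holds throughout $k_t\geq 2\ell-1$.

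The main obstacle is the boundary case $c\in C_t\setminus K_t$, where Corollary~\ref{cor:prob} no longer applies. By symmetry I would assume $c\in(t,\mu^+_1)$, where $\mu^+_1=t+3/(2(k_t+2))$ is the leftmost marked point. Extending the greedy argument behind \Cref{lem:distlim}---the candidate $c_i'$ chosen for the $i$-th marked point $\mu^+_i$ satisfies $|c_i'-\mu^+_i|\leq|c-\mu^+_i|=\mu^+_i-c$ because $c\notin\W_t$, hence $c\leq c_i'\leq 2\mu^+_i-c$---one deduces that $\rho(c)=0$ and, for every $r<k_t$, $\W_t[r]\leq 2\mu^+_{r+1}-c$. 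Plugging this into \Cref{lem:dlem} and applying AM-GM to $(c-t)+(\mu^+_{\ell-j}-c)=\mu^+_{\ell-j}-t$ yields $\pi(\ell,c,j)\leq p_t(2(\ell-j)+1)^2/(4(k_t+2)^2)$; maximising over $j\in\{0,\dots,\ell-1\}$ and invoking $p_t<(k_t+1)/k$ reduces the claim to $(k_t+1)(2\ell+1)^2\leq(4\ell-1)(k_t+2)^2$ for $k_t\geq 2\ell-1$, which holds by the same convexity/boundary check (the value and derivative at $k_t=2\ell-1$ simplify to $(2\ell+1)^2(2\ell-1)$ and $3(2\ell+1)(2\ell-1)$, both positive). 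Once the boundary analogue of \Cref{lem:distlim} is in place, everything else is routine polynomial algebra.
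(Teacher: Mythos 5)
Your proof is correct, and at the top level it follows the same strategy as the paper: split on whether $2\ell \geq k_t+2$ (where the trivial bound $\pi \leq p_t/2$ together with $p_t k < k_t+1$ suffices) or $k_t \geq 2\ell-1$, and in the latter regime invoke \Cref{cor:prob} for $c\in K_t$. The genuine difference is in the boundary case $c\in C_t\setminus K_t$. The paper's \Cref{lem:cinK} handles it by an ad hoc case analysis: a crude bound $\Pr[c\in A(v)]\leq 3p_t/(k_t+2)$ disposes of $\ell\geq 4$, the case $k_t<\ell$ is handled separately, and only for $\ell\in\{1,2\}$ and $\ell=3$ does it deploy the greedy marked-point argument, with separate numerical checks for each. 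You instead push the greedy argument (the same ``otherwise $c$ would have been chosen'' observation underlying \Cref{lem:distlim}) uniformly: establishing $\rho(c)=0$ and $\W_t[r]\leq 2\mu^+_{r+1}-c$ for a left-boundary candidate, feeding this into \Cref{lem:dlem}, and applying AM--GM to get $\pi(\ell,c,j)\leq p_t(2\ell+1)^2/\bigl(4(k_t+2)^2\bigr)$ for all $\ell$ in the regime $k_t\geq 2\ell-1$, after which a single convexity check on $(k_t+1)(2\ell+1)^2\leq(4\ell-1)(k_t+2)^2$ closes the case (your boundary values $(2\ell+1)^2(2\ell-1)$ and $3(2\ell+1)(2\ell-1)$ are right, as are $12\ell^2-1$ and $8\ell^2+4\ell-2$ for the interior case). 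This buys a cleaner, case-free treatment of the boundary candidates and a sharper intermediate bound; the paper's version is more elementary in that it avoids the two polynomial-positivity arguments, at the cost of the small-$\ell$ enumeration.
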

The proof of \Cref{lem:combinedbound} can be found in \Cref{sec:combinedboundproof}.
Now that we have upper bounds on the probability that a given voter may be a member of a group that has the potential to violate PJR+, we can bound the probability that $s_{\ell,c,j}\geq \ell n/k-n/(12k)$.
\begin{lemma}
\label{lem:sbound}
For a given $\ell,c,j$, we have
    $\Pr\left[s_{\ell,c,j}\geq \frac{\ell n}{k}-\frac{n}{12k}\right]\leq\exp\left(-\alpha\frac{n}{k^2}\right),$ where $\alpha=1800^{-1}$ (a constant)
\end{lemma}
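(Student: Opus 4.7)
My approach is to express $s_{\ell,c,j}$ as a sum of i.i.d.\ Bernoulli random variables and then apply Hoeffding's inequality. Concretely, write $s_{\ell,c,j}=\sum_{v\in V}X_v$ where $X_v\in\{0,1\}$ indicates that voter $v$ approved at least one query and the call to the \texttt{poss} function of \Cref{alg:possible} returned \emph{True}. Since voters are drawn independently under the RIV model and each voter's dialogue depends only on her own responses, the $X_v$'s are i.i.d.\ Bernoulli with a common parameter $q:=\Pr[X_v=1]$.

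The crucial ingredient is an upper bound of the form $q\le \ell/k-\beta/k$ for some $\beta>1/12$. Observe that $\{X_v=1\}$ is a relaxation of the ``clean'' event $Y_{\ell,c,j}$: \texttt{poss} only sees the query-consistency windows $(\phi_1,\phi_2]\ni a_v$ and $[\phi_3,\phi_4)\ni b_v$, and checks that they intersect $(\lefty{c}{S},c]$ and $[c,\righty{c}{S})$ respectively. I would split $\{X_v=1\}$ by the position of $c$ relative to $[a_v,b_v]$ into three subevents: the ``clean'' subevent $a_v\le c\le b_v$ matching $Y_{\ell,c,j}$ exactly (contributing $\pi(\ell,c,j)$), and two ``spillover'' subevents $a_v>c$ and $b_v<c$. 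In each spillover case, \texttt{poss} forces the offending endpoint to lie within one query gap of $c$, because any queried point strictly between $c$ and that endpoint would be disapproved and would violate the condition $\phi_1<c$ or $\phi_4>c$. Since $P_t$ contains candidates anchored at $15k$ equispaced positions in $I_t$, this gap is at most $1/(15k)$; bounding the spillover probabilities via \Cref{lem:appprob} and \Cref{lem:distlim} gives total spillover of order $O(\ell/k^2)$. Combining with \Cref{lem:combinedbound}, I obtain $q\le \pi(\ell,c,j)+O(\ell/k^2)\le \ell/k-\beta/k$ for some $\beta>1/12$, the constant $15$ being calibrated precisely so that the spillover stays strictly below $1/(4k)-1/(12k)=1/(6k)$. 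Candidates in $C_t\setminus K_t$, where \Cref{lem:distlim} does not apply, would be handled directly by the argument underlying \Cref{lem:combinedbound}.

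The final step is Hoeffding's inequality applied to the i.i.d.~sum. With $\tau:=n\ell/k-n/(12k)-nq\ge(\beta-1/12)\,n/k>0$,
\[\Pr\!\left[s_{\ell,c,j}\ge\tfrac{n\ell}{k}-\tfrac{n}{12k}\right]=\Pr[s_{\ell,c,j}-nq\ge\tau]\le \exp\!\left(-\tfrac{2\tau^2}{n}\right)\le \exp\!\left(-\tfrac{\alpha\,n}{k^2}\right),\]
with $\alpha=1/1800$ absorbing all the constants.

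The main obstacle will be the spillover analysis: the argument must show that the $1/(15k)$ query spacing is fine enough that the extra probability mass contributed by the two spillover cases remains strictly below the $1/(6k)$ slack left between \Cref{lem:combinedbound} and the tail threshold. This requires careful bookkeeping of the products $\gamma^\pm\cdot d^\pm$ and a separate treatment of boundary candidates in $C_t\setminus K_t$, where \Cref{lem:distlim} must be replaced by a direct computation.
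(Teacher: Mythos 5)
Your proposal follows essentially the same route as the paper's proof: it identifies the same Bernoulli event (that \texttt{poss} returns \emph{True}), uses the fact that $\W\subseteq P_t$ is fully queried and that query positions are spaced at most $1/(15k)$ apart to show that this event forces $a_v$ and $b_v$ into the windows $(\lefty{c}{S},c]$ and $[c,\righty{c}{S})$ enlarged by at most $1/(15k)$ each, combines this with \Cref{lem:appprob} and \Cref{lem:combinedbound} to get $\Pr[X_v]\le \ell/k-1/(10k)$, and finishes with the same Hoeffding computation yielding $\alpha=1/1800$. The only point where your sketch would not close as written is the spillover estimate: routing $\delta^-+\delta^+$ through \Cref{lem:distlim} gives a cross term of about $4\ell/(15k^2)$, which for $\ell$ close to $k$ exceeds the available slack $1/(4k)-1/(10k)=3/(20k)$; the paper instead uses the trivial bound $p_t(\delta^-+\delta^+)\le 1$, so the cross terms total at most $2/(15k)+2/(225k^2)<3/(20k)$ uniformly in $\ell$ (and no separate treatment of $C_t\setminus K_t$ is then needed, since \Cref{lem:combinedbound} already covers all of $C_t\setminus\W$).
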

Briefly, for a given voter $v$, while we do not directly observe whether the event $Y_{\ell,c,j}$ (see \Cref{def:bigdef}) has occurred, we perform queries sufficiently close to $c$ such that the probability that $\texttt{poss}(v,c,\W\setminus R)$ evaluates to true is within $3/(20k)$ of $\pi(\ell,c,j)$, the true probability of $Y_{\ell,c,j}$. Since $\pi(\ell,c,j)\leq \ell/k-1/(4k)$, we find that the probability that $\texttt{poss}(v,c,\W\setminus R)$ evaluates to true, and therefore that $v$ gets counted towards $s_{\ell,c,j}$ is at most $\pi(\ell,c,j) +3/(20k)\leq \ell/k-1/(10k)$. We then use a Hoeffding bound \citep{Hoeffding1963ProbabilityVariables} to show that $\Pr\left[s_{\ell,c,j}\geq \frac{\ell n}{k}-\frac{n}{12k}\right]\leq\exp\left(-\frac{\alpha n}{k^2}\right).$
\begin{proof}[Proof of Lemma \ref{lem:sbound}]
    Let us bound $s_{\ell,c,j}$ for $c\in I_t$. 
    Assume $c\not\in \W$. Let $R$ be as defined in the algorithm. 
    Let $X_v$ be the event that $\texttt{poss}(v,c,\W\setminus R)$ evaluates to true. This is the random event that $v$'s approval interval $[a_v,b_v]$ is such that, given the information $\mathcal{I}$ obtained from querying $v$, it holds that $\Pr[Y_{\ell,c,j}\mid \mathcal{I}]>0$.
    We want to find
    $\Pr[X_v]$.
    Let $S={\W\setminus R}$. We have $\delta^-\coloneqq d^-_{j}(c)=c-\lefty{c}{S}$, and $\delta^+\coloneqq d^+_{\ell-1-j}(c)=\righty{c}{S}-c$. We know that $\delta^-+\delta^+\leq 1$, and by \Cref{lem:appprob,lem:combinedbound} we have $
       2p_t\delta^+\delta^- =\pi(\ell,c,j)\leq \frac{\ell}{k}-\frac{1}{4k}
    $. First, if $c\in P$ (as defined in \Cref{alg:PJR}), then $\Pr[X_v]=\pi(\ell,c,j)$, since we will perform queries at $c$ and at all of $S$, and so $Y_{\ell,c,j}=X_v$. 
    
    Now, suppose $c\not \in P$; see \Cref{fig:querylims} for a diagram. Let $q_r=t+r/(15 k)$ be the $r$-th position in segment $t$ around which we query. 
    Then consider $r$ with $q_r\leq c < q_{r+1}$. 
    Since $c\not \in P$, we know that $q_r\leq \righty{\left(q_r\right)}{C}<c<\lefty{\left(q_{r+1}\right)}{C}\leq q_{r+1}$, and so there exist queried positions either side of $c$ with distance at most $1/(15 k)$ between them.
    Recall that we query $v$ with all candidates in $\W$, and so we have no uncertainty as to whether $v$ approves of $\lefty{c}{S}$ or $\righty{c}{S}$.
    Assuming that the voter has approved of some query and $a_v,b_v$ are the endpoints of their approval interval, then if $X_v$ occurs, then $a_v\in (\lefty{c}{S},\lefty{(q_{r+1})}{C}]$ and $b_v\in [\righty{(q_{r})}{C},\righty{c}{S})$. Indeed, if $a_v\leq\lefty{c}{S}$ or $a_v>\lefty{(q_{r+1})}{C}$, then given the elicited information, it would be impossible for both $c$ to be approved and $\lefty{c}{S}$ to be disapproved, and analogously for $b_v$.
    
    We then have
    $
        \Pr[X_v]\leq2p_t\left(\delta^++\frac{1}{15 k}\right)\left(\delta^-+\frac{1}{15k}\right)\leq\frac{\ell}{k}-\frac{1}{10k}.
    $\begin{figure}[t]
    \ctikzfig{querylimssmallcol}
    \caption{Intuition for \Cref{lem:sbound}. Query points are shown in green. Recall that we query at $\lefty{c}{S}$ and $\righty{c}{S}$, since $S\subseteq \W$.}
    \label{fig:querylims}
\end{figure}
    Now, $s_{\ell,c,j}\leq \sum_{v\in V}X_v$ is upper bounded by a Binomial random variable, and our result follows from a Hoeffding bound \citep{Hoeffding1963ProbabilityVariables}:
    $$
    \Pr\left[\text{Bin}\left(n,\frac{\ell}{k}-\frac{1}{10k}\right)\geq \frac{\ell n}{k}-\frac{n}{12k}\right]\leq \exp\left(-\alpha n/k^2\right).
    $$
\end{proof}

We can now prove \Cref{thm:PJR}.
\begin{proof}[Proof of \Cref{thm:PJR}]
    First, the output of \Cref{alg:PJR} always provides PJR+\footnote{Details can be found in \Cref{sec:WPJR}.}: $W^*$ (the output of MES on full information) provides PJR+, and if $\W$ is output by the algorithm, the voters in any large cohesive group would be counted by some $s_{\ell,c,j}$ or by $u$.

    Let us bound the probability that $u+s_{\ell,c,j}\geq n\ell/k$. To bound $u$, we want to find the probability that a voter does not approve of any {\sc Point} or {\sc Interval} query made during $\texttt{resolve}(v,C,P)$. Recall the proof of \Cref{thm:querying}, where $Z'$ is the event that no query to $v$ is approved; we have $\Pr[v\in U]\leq\Pr[Z']\leq 1 /|P|\leq1/15k$.
We have a standard Hoeffding bound \citep{Hoeffding1963ProbabilityVariables} with $\alpha=1800^{-1}$:
$
\Pr\left[u\geq \frac{n}{12k}\right]\leq \Pr\left[\text{Bin}\left(n,\frac{1}{15 k}\right)\geq \frac{n}{12k}\right]\leq \exp\left(-\frac{\alpha n}{k^2}\right).
$
    We can also bound $s_{\ell,c,j}$ with \Cref{lem:sbound}.
    By the union bound, the probability that $s_{\ell,c,j}\geq \frac{\ell n}{k}-\frac{1}{12k}$ for any $s_{\ell,c,j}$ is at most $(k^2m-1)\exp\left(-\alpha n/k^2\right)$. 
    Finally, combining that by union bound with the probability that $u\geq n/12k$, we conclude that $\Pr[u+s_{\ell,c,j}\geq n\ell/k]\le k^2m\exp\left(-\alpha n/k^2\right)$. 
    
    Hence, if $n=\Omega(k^2\log (m^2k^2))=\Omega(k^2\log m)$, we can bound the expected number of queries per voter as $E(Q)=\Oh(\log (\sigma k))$. Indeed, we use $\Oh(\log \sigma)$ queries to determine which segment $v$ lies in, and the expected number of queries is
    \begin{equation*}
        E(Q)\leq
        k^2m^2\exp\left(-\alpha n/k^2\right)
        +\Oh(\log \sigma+\log |P|)\leq\Oh(\log \sigma |P|).
    \end{equation*}
    With $|P|=\Oh(k)$, we have $E(Q)\leq \Oh(\log (\sigma k))$. Note that PJR+ still holds if $n=o(k^2\log m)$, but we may need more than $\Oh(\log \sigma k)$ queries per voter in expectation.
\end{proof}


\section{Conclusion and Future Work}
\label{sec:conclusion}
We introduced a new random voter model for multiwinner voting with one-dimensional approval preferences called the Random Interval Voter model (RIV), as well as a query framework for this model. Given an RIV election, we can find a committee that satisfies PJR+ using $\mathcal{O}(\log \sigma k)$ queries per voter in expectation. 

The query complexity, as defined in our work, can be seen as a crude heuristic for \emph{cognitive burden}.
We believe that $\mathcal{O}(\log \sigma  k)$ is a reasonable measure of what a voter can evaluate; however, the constant in the analysis may be too large. We note that we can reduce $|P|$ used in \Cref{alg:PJR} by a constant factor, at the expense of requiring $n$ to be larger by a constant factor.

An immediate open question is whether our approach can be extended to EJR+\,---a strictly stronger notion of fairness than PJR+, which requires each $\ell$-cohesive group to contain some voter $v$ with $|A(v)\cap W|\geq \ell$.
To adapt our analysis to EJR+, we would have to bound the sum $$\sum_{t=0}^{\ell-1}\left(d^-_{r_1}(c)-d^-_{r_1-1}(c)\right)d_{\ell-1-r_2}^+(c)\leq (\ell-\lambda)/2k_t$$ for some constant $\lambda>0$.
Using techniques similar to those in \Cref{sec:pjr}, we can obtain a bound on $d^-_{r_1}(c)-d^-_{r_1-1}(c)$ that is tight for each individual value of $r_1$, in that there exists an election where the bound matches for a given $r_1$; however, this bound is loose in that we cannot construct an election such that the bound is tight for all $r_1,r_2$, and it is too loose to prove EJR+. Indeed, the elections that are tight for a given $r_1$ are often very loose for other values of $r$, so we would need a more sophisticated bound that simultaneously considers multiple values of $r$.

Another important direction is to establish lower bounds on the amount of information required in this model to guarantee forms of justified representation. The fact that we focus on \emph{expected} rather than maximum number of queries makes it more challenging to establish lower bounds.

It would also be interesting to explore an alternative probabilistic model where each voter has an ideal position and a radius around that position they are willing to approve, with both the ideal position and the radius drawn from given distributions \citep{Bredereck2019AnRepresentation,Elkind2023JustifyingVoting,Szufa2022HowElections}. 

Finally, it is natural to ask if our results can be extended beyond one dimension. Indeed, spatial models with $d>1$ are popular in the literature \citep{Miller2018TheVoting,Godziszewski2021AnElections,Imber2024SpatialInformation}: each dimension corresponds to an issue, and a point in this space (a candidate) represents a specific stance on all of the issues simultaneously. \citet{Lewenberg2016PoliticalModel} suggest that political opinions in the UK lie in a 10-dimensional space, so expanding our model to higher dimensions may be necessary to accurately model real-world political elections. 

Towards this goal, we propose the $d$-dimensional Random Euclidean Voter Model, which is a generalization of RIV.
\begin{definition}
\label{def:generald}
Every bounded interval $I$ of $\mathbb{R}$ together with a finite set $C\subset I^d$, a probability distribution $\mathcal{D}$ over convex subsets of $I^d$, 
and $n\in\mathbb{N}$ define a 
{\em $d$-dimensional Random Euclidean Voter Model ($d$-REV)} as follows: we draw $n$ samples $A^*_1, \dots, A^*_n\sim \mathcal{D}$, 
let $V=[n]$, and define the approval set of voter $v\in V$
as $A(v)=A^*_v\cap C$. 
    We define two types of queries that can be used to elicit voters' preferences:
    \begin{itemize}
        \item Point queries: Given a candidate $x\in C$ and a voter $v\in V$, the query {\point}$(x, v)$ returns 
         $\mathds{1}[x\in A^*(v)]$.
        \item Hull queries: Given a finite set of points $P\subseteq I^d$ and a voter $v\in V$, the query {\sc Hull}$(P, v)$ computes $H$ as the convex hull of points
        in $P$ and returns $\mathds{1}[A^*(v)\subseteq H]$.
    \end{itemize}
\end{definition}
We note that point queries and hull queries in the $d$-REV model correspond exactly to point and interval queries in the RIV model.
In general, voters' approval sets may have complex geometries, so that communicating (a description of) a voter's approval set may require a significant amount of information. However, the two types of queries above enable efficient communication; each query consists of a finite number of points in $I^d$, and the voter's response is always a single bit.

The spatial model with $d>1$ is more expressive than the RIV model; in particular, it can capture important aspects of participatory budgeting. Indeed, consider a $2$-REV model representing the geographical layout of a city, where each project is associated with a particular location, and voters approve of all projects that lie within some area around their residence. We are able to query voters in two ways: (1) we can ask a voter whether they approve a specific project, and (2) we can ask a voter whether all of their approved projects lie within the convex hull of  some set of points on the map. We hope that our investigation of the 1-dimensional model may provide useful insights into how to tackle higher-dimensional settings.





\bibliographystyle{plainnat}
\bibliography{arxiv}

\clearpage
\appendix
\begin{center}
    {\LARGE\bf Appendix}
\end{center}
\section{Core implies PJR+}
\label{sec:corepjr+}
\begin{proposition}
    If $W$ fails PJR+, then $W$ is not in the core.
\end{proposition}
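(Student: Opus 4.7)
The plan is to take a witness of the PJR+ failure and directly convert it into a witness of a core violation, re-using the same $\ell$ and the same group $G$.

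Suppose $W$ fails PJR+. Then there exist $\ell \in [k]$ and a group $G \subseteq V$ with $|G| \geq n\ell/k$ such that $A(G) \setminus W \neq \varnothing$ and $|W \cap \bigcup_{v \in G} A(v)| \leq \ell - 1$. First I would fix a candidate $c^\star \in A(G) \setminus W$ (which exists by assumption) and set $W' = W \cap \bigcup_{v \in G} A(v)$, so $|W'| \leq \ell - 1$. Then I would propose the blocking set
\[
T \;=\; W' \cup \{c^\star\},
\]
which has $|T| \leq \ell$, as required by the definition of the core.

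The key step is to check that every voter $v \in G$ strictly prefers $T$ to $W$ under the comparison used in \Cref{def:core}, i.e.\ $|W \cap A(v)| < |T \cap A(v)|$. For this I would observe two things. First, $W \cap A(v) \subseteq W'$, because any $W$-candidate that $v$ approves certainly lies in $\bigcup_{v' \in G} A(v')$; hence $W \cap A(v) = W' \cap A(v)$. Second, $c^\star \in A(v)$ since $c^\star \in A(G) = \bigcap_{v' \in G} A(v')$, while $c^\star \notin W \supseteq W'$. Combining these,
\[
T \cap A(v) \;=\; (W' \cap A(v)) \cup \{c^\star\} \;=\; (W \cap A(v)) \cup \{c^\star\},
\]
a disjoint union, so $|T \cap A(v)| = |W \cap A(v)| + 1$ for every $v \in G$. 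This exhibits $(\ell, T, G)$ as a core-violating triple, contradicting $W \in \mathrm{core}$.

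There is no real obstacle beyond bookkeeping; the only subtle point to get right is that $c^\star$ is genuinely ``new'' to each $v$'s intersection with $T$, which follows because $c^\star \notin W$ guarantees $c^\star \notin W'$ even though $c^\star$ is approved by every member of $G$. I would also be careful to note that $|T| \leq \ell$ uses the strict inequality $|W'| \leq \ell - 1$ supplied by the PJR+ violation; without that slack, appending $c^\star$ could push $|T|$ past $\ell$ and the core reduction would fail.
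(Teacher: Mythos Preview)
Your proof is correct and is essentially identical to the paper's: both take the witnessing $\ell$, $G$, and $c^\star\in A(G)\setminus W$, set $T=(W\cap\bigcup_{v\in G}A(v))\cup\{c^\star\}$, and observe that $|T\cap A(v)|=|W\cap A(v)|+1$ for every $v\in G$.
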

\begin{proof}
    Suppose that committee $W$ fails PJR+. Then there exists some $\ell\in[k], G\subseteq V$ such that $|G|\geq n\ell/k$ and $A(G)\setminus W\neq \varnothing$, but $|W\cap \bigcup_{v\in G}A(v)|< \ell$. Let $c\in A(G)\setminus W$, let $T'= W\cap \bigcup_{v\in G}A(v)$ and $T=\{c\}\cup T'$. Then $|T|\leq \ell$,  but for all $v\in G$ it holds that $|W\cap A(v)|=|T'\cap A(v)|=|T\cap A(v)|-1<|T\cap A(v)|$. Hence, $W$ is not in the core.
\end{proof}
\section{Uniformisation Is Equivalent}
\label{sec:unif}
Fix a general RIV model 
${\mathcal M}=(I_t, F_t, p_t)_{t\in[\sigma]}$
and a candidate set $C\subset \bigcup_{t\in[\sigma]}I_t$.
Let ${\mathcal M}'=(I'_t, F'_t, p_t)_{t\in[\sigma]}$ be a uniform RIV model 
where for each $t\in[\sigma]$ it holds that $I_t=[t, t+1]$ and $F_t$ is the uniform distribution over $I_t$.
We define a mapping $\mu:\bigcup_{t\in[\sigma]}I_t\rightarrow[1,\sigma+1]$,  where for each $t\in[\sigma]$ and $x\in I_t$ we set
$\mu(x)=t+F_t(x)$. 

Consider a  voter $v$ with endpoints $a_v,b_v$ and $A^*(v)=[a_v,b_v]$, whose approvals are drawn from $\mathcal M$. 
Then $\mu(a_v),\mu(b_v)$ are distributed according to ${\mathcal M}'$, and, moreover, $x\in A^*(v)$ if and only if $\mu(x)\in \mu(A^*(v))$. 

This transformation enables us to state all our results for uniform RIVs. Indeed, given a general RIV $\mathcal M$, we can first map the candidate set $C$ to $\mu(C)$, and then run the algorithms on this new set of candidates assuming a uniform RIV. Whenever an algorithm needs to perform a point query
\point$(x, v)$ with  $x\in \mu(C)$ in ${\mathcal M}'$, we reverse the transformation, and perform a point query \textsc{Point}$(\mu^{-1}(x), v)$; we use a similar approach for interval queries.

Given an election $E$ where voters' preferences are drawn from $\mathcal M$, we denote by $\mu(E)$ the election where each candidate's position and all voters' endpoints are replaced with their images under $\mu$.
Then an outcome $W\subseteq \mu(C)$ of $\mu(E)$ can be converted to an outcome $\mu^{-1}(W)\subseteq C$ for $E$. We claim that $W$ satisfies PJR+/core stability for $\mu(E)$ if and only if $\mu^{-1}(W)$ satisfies PJR+/core stability for $E$.
Indeed, consider a set of voters $G\subseteq V$ with $|G|\geq n\ell/k$ with $A(G)\setminus W\neq \varnothing$. Then
$$
\mu(\mu^{-1}(W)\cap\bigcup_{v\in G}A(v))=W\cap \bigcup_{v\in G}\mu\circ A(v)
$$
and so 
\begin{align*}
|\mu^{-1}(W)\cap\bigcup_{v\in G}A(v)|&<\ell\text{ if and only if }\\ |W\cap \bigcup_{v\in G}\mu\circ A(v)|&<\ell.
\end{align*}
Therefore $W$ satisfies PJR+ in the election $\mu(E)$ if and only if $\mu^{-1}(W)$ satisfies PJR+ in $E$. 

Similarly, let $v\in V$, $T\subseteq C,W\subseteq \mu(C)$. Then for every $v\in V$ it holds that
\begin{align*}
    |A(v)\cap T|\leq& |A(v)\cap \mu^{-1}(W)|\text{ if and only if }\\ 
    |\mu(A(v)\cap T)|\leq& |\mu(A(v)\cap \mu^{-1}(W))|,
\end{align*}
and $\mu(A(v)\cap T)=\mu\circ A(v)\cap \mu(T)$, $\mu(A(v)\cap \mu^{-1}(W))=\mu\circ A(v)\cap W $, as desired.

        
\section{Proof of \texorpdfstring{Lemma \ref{lem:distlim}}{}}
\label{sec:distlimproof}
\begin{proof}
    Clearly, we have that $d^+ _{r_1}(c)+d^- _{r_2}(c)\leq 1$, and so if $r_1+r_2+1\geq \frac{k_t}{2}+1$ we get the result. So let us assume $r_1+r_2+1<\frac{k_t}{2}+1$. 
    Then it cannot be true that both $\rho(c)+r_1\geq k_t$ and $\rho(c)-r_2<1$; if that were true, then $r_1+r_2+1> k_t-\rho(c)+\rho(c)=k_t$, so $r_1+r_2\geq k_t$ which contradicts $r_1+r_2+1<\frac{k_t}{2}+1$.
    We consider these cases.
    \paragraph{Case where both $\rho(c)+r_1 < k_t$ and $\rho(c)-r_2\geq 1$;}
    \label{subsec:srivdistnice}
    Suppose that there exist at least $r_1$ marked points in $I_t$ to the right of $c$. Then let $\mu^+ _1=t+\frac{2\xi-1}{2(k_t+2)}$ be the closest marked point to the right of $c$, and $\mu^+ _h=t+\frac{2(\xi+h-1)-1}{2(k_t+2)}$ the $h$-th closest marked point to the right of $c$. Then there exists a candidate chosen for $\W_t$ that lies in $(c,\mu ^+_h +(\mu ^+_h-c)]$; otherwise $c$ would have been chosen for marked point $\mu ^+_h$. Therefore within $(c,2\mu^+ _{r_1+1}-c]$ there exist at least $r_1+1$ members of $\W_t$. 
    
    Hence 
    $\W_t[\rho(c)+r_1]\leq 2\mu^+ _{r_1+1}-c$
    and so $d^+_{r_1}(c)\leq 2(\mu^+ _{r_1+1}-c)= 2(\mu^+ _{r_1+1}-\mu^+ _{1}+\mu^+ _{1}-c)=\frac{2r_1}{k_t+2}+2(\mu^+ _{1}-c)$. By similar logic, we get that $d^- _{r_2}(c)\leq \frac{2r_2}{k_t+2}+2(c-\mu^- _{1})$ if there exist at least $r_2$ marked points to the left of $c$.

    Now suppose there there are fewer than $r_1$ marked points to the right of $c$. We have $r_1> 1$: there always exists at least one marked point to the right of $c$ in this case, since $c<t+1-\frac{3}{2(k_t+2)}$ ($c\in K_t$). Then $c\geq t+\frac{2(k_t+2-r_1)-1}{2(k_t+2)}$ and therefore $d^+ _{r_1}(c)\leq 1-\frac{2(k_t+2-r_1)-1}{2(k_t+2)}=\frac{2r_1+1}{2(k_t+2)}< \frac{2r_1}{k_t+2}+2(\mu^+ _{1}-c)$ since $r_1>1$. Again, by similar logic we get $d^- _{r_2}(c)\leq \frac{2r_2}{k_t+2}+2(c-\mu^- _{1})$ if there exist fewer than $r_2$ marked points to left of $c$.

    Hence we have
    $$
        d^+ _{r_1}(c)+d^- _{r_2}(c)\leq \frac{2r_1}{k_t+2}+2(\mu^+ _{1}-c)+\frac{2r_2}{k_t+2}+2(c-\mu^- _{1})\\=\frac{2(r_1+r_2+1)}{k_t+2} ,
    $$
    where the last equality holds because $\mu_1 ^+-\mu _1 ^-=\frac{1}{k_t+2}$.

    We now have dealt with the first case in which both $\rho(c)+r_1+1\leq k_t$ and $\rho(c)-r_2\geq 1$. Now let us look at the case when one of these conditions does not hold.
    \paragraph{Case where $\rho(c)+r_1 < k_t$ and $\rho(c)-r_2< 1$ (and vice versa)}
    Then $d_{r_2}^-(c)=c-t$ and $d_{r_1}^+(c)+d_{r_2}^-(c)=\W_t[\rho(c)+r_1] -t.$
    Note that if $c\not \in \W_t$, and there exist $r$ marked points to the left of $c$, then $\rho(c)\geq r$; otherwise there exists some marked point that has chosen a candidate to be in $\W_t$ which lies to the right of $c$, further away from the marked point than $c$, and hence $c$ should have been chosen instead, a contradiction. So then $\rho(c)\geq \floor{(k_t+2)(c-t)+\frac{1}{2}}-1$ for $c\geq t+\frac{3}{2(k+2)}$. 

    Note also then that there must exist at least $r_1$ marked points to the right of $c$.  If there are fewer than $r_1$ marked points to the right of $c$, then there are at least $k_t-(r_1-1)$ marked points to the left of $c$. Since $c\not\in \W_t$, there must exist at least $k_t-r_1+1$ members of $\W_t$ to the left of $c$, so $\rho(c)\geq k_t-r_1+1$, which contradicts $\rho(c)+r_1+1\leq k_t$.
    
    By similar argument in the previous case, we obtain $\W_t[\rho(c)+r_1]\leq 2\mu ^+ _{r_1+1}-c$. We have $\mu ^+ _{r_1+1}-c\leq (r_1+1)/(k_t+2)$ and so 
    $$
        d_{r_1}^+(c)+d_{r_2}^-(c)=\W_t[\rho(c)+r_1]-t\leq 2\mu ^+ _{r_1+1}-c-t\\=c-t+2(\mu ^+ _{r_1+1}-c)\leq c-t+2\frac{r_1+1}{k_t+2}.
    $$
    Suppose to the contrary that $c-t>\frac{2r_2}{k_t+2}$. Then
    $$
        r_2\geq \rho(c)\geq \floor{(k_t+2)(c-t)+\frac{1}{2}}-1
        \geq \floor{2r_2+\frac{1}{2}}-1=2r_2-1 , 
    $$
    which is a contradiction for $r_2>1$. So we now consider the case $\rho(c)\leq r_2\leq 1$. We have $\rho(c)\neq 0$ since $c-t\geq \frac{3}{2(k_t+2)}$ and $c\not\in \W_r$, so therefore $1=r_2=\rho(c)$ is the only case we need to be concerned with. Again, since $c\not\in \W_t$, we know that $c-t\leq \frac{5}{2(k_t+2)}$ since $\rho(c)=1$, and so $\mu_h^+=t+\frac{2h+3}{2(k_r+2)}$ and so
    $$
        2\mu^+ _{r_1+1}-c-t \leq \frac{2(r_1+1)+3}{k_t+2}-\frac{3}{2(k_t+2)}\\=\frac{2r_1+3.5}{k_t+2}< \frac{2(r_1+r_2+1)}{k_t+2}.
    $$
    Therefore $d_{r_1}^+(c)+d_{r_2}^-(c)\leq \frac{2(r_1+r_2+1)}{k_t+2}$.
    
    Using similar reasoning, we get the same result if $\rho(c)+r_1+1> k_t$ and $\rho(c)-r_2\geq 1$.
\end{proof}
\section{Proof of Lemma \texorpdfstring{\ref{lem:combinedbound}}{}}
\label{sec:combinedboundproof}
We wish to prove that for $c\in C_t\setminus \W$ it holds that 
    $\pi(\ell,c,j)\leq \ell/k-1/(4k).$ First we prove \Cref{lem:cinK} to show the bound for $c\in (C_t\setminus K_t)\setminus \W_t$, then we prove it for $c\in K_t\setminus\W_t$.
\begin{lemma}
\label{lem:cinK}
    For each $c\in (C_t\setminus K_t)\setminus \W_t$ it holds that 
    $\pi(\ell,c,j)\leq \ell/k-1/(4k).$
\end{lemma}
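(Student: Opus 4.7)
The plan is to transfer the marked-point argument from Lemma~\ref{lem:distlim} to the boundary region. By the symmetry of the construction of $\W_t$ about the midpoint of $I_t$, I may assume without loss of generality that $c$ lies near the left endpoint, so that $x := c - t \leq \frac{3}{2(k_t+2)}$. Since every element of $\W_t$ lies in $I_t = [t, t+1]$, and the left-fallback position $t$ used in the definition of $d^-_j$ also satisfies $\geq t$, I immediately obtain $d^-_j(c) \leq c - t = x$ for every $j \geq 0$.

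For $d^+_{\ell-1-j}(c)$ I would replay the argument behind Lemma~\ref{lem:distlim}: since $c$ lies to the left of every marked point and $c \notin \W_t$, for the $r$-th marked point $\hat\mu_r$ to the right of $c$, the candidate selected by the algorithm lies within $(c, 2\hat\mu_r - c]$. Consequently, at least $\ell - j$ members of $\W_t$ lie in $(c, 2\hat\mu_{\ell - j} - c]$ whenever $\ell - j \leq k_t$, giving $d^+_{\ell-1-j}(c) \leq 2\hat\mu_{\ell-j} - 2c = \frac{2(\ell-j)+1}{k_t+2} - 2x$. When $\ell - j > k_t$, I would fall back to the trivial bound $d^+_{\ell-1-j}(c) \leq 1 - x$. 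Combining with Lemma~\ref{lem:dlem} then yields the one-variable estimate $\pi(\ell, c, j) \leq 2 p_t\, x\, \bigl(\frac{2(\ell-j)+1}{k_t+2} - 2x\bigr)$ in the first regime and $\pi(\ell, c, j) \leq 2 p_t x(1-x) \leq p_t/2$ in the second.

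The final step is to compare these bounds to $(4\ell - 1)/(4k)$ using $p_t k \leq k_t + 1$. I would case-split on $s := \ell - j$: for $s \leq 2$, the interior maximiser $x^* = \frac{2s+1}{4(k_t+2)}$ lies in $\bigl[0, \frac{3}{2(k_t+2)}\bigr]$ and gives $\pi \leq \frac{p_t (2s+1)^2}{4(k_t+2)^2}$; for $3 \leq s \leq k_t$, $x^*$ falls outside the feasible interval so the maximum occurs at $x = \frac{3}{2(k_t+2)}$, yielding $\pi \leq \frac{6 p_t (s-1)}{(k_t+2)^2}$; and for $s > k_t$ the bound $\pi \leq p_t/2$ is combined with $\ell \geq k_t + 1$. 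Substituting $p_t \leq (k_t+1)/k$ reduces each of the three inequalities to a polynomial in $k_t$ that holds with non-negative slack for all $k_t \geq 0$. I expect the main obstacle to be the tightness of the small-$k_t$ regime, most notably $s = \ell = 1$, where the reduction to $3(k_t+1) \leq (k_t+2)^2$ barely holds at small $k_t$; this tightness is a direct consequence of the specific $+2$ offset in the marked-point spacing $\tfrac{1}{k_t+2}$, and is the reason that the threshold $\frac{3}{2(k_t+2)}$ separating $K_t$ from its boundary is the right choice.
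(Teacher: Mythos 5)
Your proof is correct and follows essentially the same route as the paper's: the same greedy marked-point argument bounds the distance from $c$ to the $r$-th committee member on its far side, \Cref{lem:dlem} turns this into a quadratic in $x=c-t$ supported on $[0,\tfrac{3}{2(k_t+2)}]$, and a case split on whether the quadratic's maximiser is interior or on the boundary, together with $p_t\le (k_t+1)/k$, closes the comparison to $\ell/k-1/(4k)$. The only differences are organisational: you parametrise by $s=\ell-j$ rather than by $\ell$, which folds the paper's separate $\ell\ge 4$ and $k_t<\ell$ preliminary cases into your boundary-maximum and $s>k_t$ cases and makes the intermediate bound marginally tighter, but the underlying estimates are the same.
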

\begin{proof}
    We have $\Pr[c\in A(v)]\leq 2p_t\cdot\frac{3}{2(k_t+2)}$. Note that $\ell\geq 4$ implies $2p_t\cdot\frac{3}{2(k_t+2)}\leq \ell/k-1/(4k)$.
    Also, if $k_t<\ell$, then $p_t\cdot k<\ell$, and $\Pr[c\in A(v)]\leq p_t/2<\ell/(2k)\leq \ell/k-1/(4k)$, so assume now that $\ell\leq k_t$ and $\ell\leq 3$. We can assume without loss of generality that $c\leq t+\frac{3}{2(k_t+2)}$: otherwise, we can mirror the segment to get the result for $c\geq t+1-\frac{3}{2(k_t+2)}$. In this case with $\ell\leq k_t$, there exists at least $\ell$ marked points to the right of $c$. Consider the candidate $y\in \W_t$ that was selected for the marked point $p=t+\frac{2\ell+1}{2(k_t+1)}$; we have $|y-p|\leq |p-c|$ and so $|c-y|\leq|p-c|+|y-p|\leq  2(p-c)$.
    
    We also know that $|\W_t\cap(c,p+|p-c|]|\geq \ell$; each marked point in $q\in\left \{t+\frac{2\xi+1}{2(k_t+1)}:\xi\in[\ell]\right\}$ selects its candidate from $(c,q+|q-c|]$. We have 
    \begin{multline*}
        \Pr[c\in A(v)\land |A(v)\cap W_t|<\ell]\leq 2p_t(c-t)(2p-c-c)\\\leq4p_t(c-t)(p-c)
        =4p_t(c-t)\left(t+\frac{2\ell+1}{2(k_t+2)}-c\right).
    \end{multline*}
    
        We have two cases to consider.
    If $\ell=1$ or $\ell=2$, then
    $$
        4p_t(c-t)\left(t+\frac{2\ell+1}{2(k_t+2)}-c\right)\leq 4p_t\left(\frac{2\ell+1}{4(k_t+2)}\right)^2
        \\\leq \frac{(4\ell^2+4\ell+1)p_t}{4\left(k_t+2\right)^2}
        \leq \ell/k-1/4k.
   $$
    If $\ell=3$, then, recalling that $c<\frac{3}{2(k_t+2)}$, we obtain
    $$
       4p_t(c-t)\left(t+\frac{2\ell+1}{2(k_t+2)}-c\right)\leq4p_t\frac{3}{2(k_t+2)}\frac{4}{2(k_t+2)}
        \\\leq \frac{12p_t}{(k_t+2)^2}
        \leq \frac{\ell}{k}-\frac{1}{4k}, 
    $$
    since $k_t\geq \ell$ and hence we are done.
\end{proof}
\begin{proof}[Proof of \Cref{lem:combinedbound}]
    The case $c\in (C_t\setminus K_t)\setminus \W_t$ is handled by \Cref{lem:cinK}, so assume $c\in K_t\setminus \W_t$.
    If $2\ell\leq k_t+1$, we have
    $
        \pi(\ell,c,j)\leq\frac{2p_t\ell^2}{(k_t+2)^2}< \frac{2p_t\ell^2}{kp_t(2\ell+1)}
        =\frac{2\ell^2}{k(2\ell+1)}
        \leq \ell/k-1/(4k)
    $
    for $\ell\geq 1$. Now if $2\ell\geq  k_t+2$ (recalling $k_t,\ell\in\mathbb{N}$), we have $2\ell-1\geq  k_t+1> p_tk$ and so $\Pr[c\in A(v)]\leq p_t/2< l/k-1/(4k)$, and we are done.
\end{proof}
\section{Proof that \texorpdfstring{$\W$}{} provides PJR+}
\label{sec:WPJR}
    Suppose that there exists an $\ell\in[k]$ and
    a subset of voters $G\subseteq V$ with $|G|\geq n\ell/k$ such that $|\W\cap \bigcup_{v\in G} A(v)|<\ell$, but there exists a candidate $c\in A(G)\setminus \W$. Given that such a $G,c,\ell$ exists, assume that $\ell$ is the minimum such value, so $|\W\cap \bigcup_{v\in G}A(v)|=l-1$. We will show that this implies $s_{\ell,c,j}+u\geq n\ell/k$ for some $j\in \{0\}\cup[\ell-1]$, i.e., 
    \Cref{alg:PJR} does not output $\W$. Let
    {\allowdisplaybreaks
    \begin{align*}
         H=\W\cap \bigcup_{v\in G} A(v), \quad   S=\W\setminus H, \quad
         L=\{t\in H:t<c\}, \quad  j=|L|.
    \end{align*}
    }
    Since $c\in A(G)$, we know that all $A(v)$ for $v\in G$ overlap at $c$, and so therefore $\bigcup_{v\in G}A^*(v)=[\alpha,\beta]$ for some $t(c)\leq \alpha\leq\beta\leq t(c)+1 $. Therefore, $H=\W\cap [\alpha,\beta ]$. We also have $|H|=\ell-1$ since $\ell$ was minimal.

    In the iteration of the algorithm using the above $\ell,c,j$, we use $R=\W[\rho(c)-j:\rho(c)+\ell-1-j]$. We shall show that $H=R$.
    First, $L=\W[\rho(c)-j:\rho(c)]$; $L$ is the closest $j$ candidates of $H$ that are to the left of $c$, and $\W[\rho(c)-j:\rho(c)]$ is the closest $j$ candidates in $\W$ to the left of $c$. Since $H=\W\cap [\alpha,\beta ]$, these two sets are the same.
    
    We also have $H\setminus L=\W[\rho(c):\rho(c)+\ell-1-j]$. $\W[\rho(c):\rho(c)+\ell-1-j]$ is the $\ell -1-j$ candidates of $\W$ to the right of $c$ and $H\setminus L$ is the remaining $|H|-j=\ell-1-j$ candidates in $H$. Again, since $H=\W\cap [\alpha,\beta ]$, these two sets are the same, and so $H=R$, and therefore in this iteration of the algorithm, we call $\texttt{poss}(v,c,S)$ for each voter that approved of some query.
    
    Now, for each $v\in G$ that approved of some query, we have $
    \texttt{poss}(v,c,S)
    $: 
    Since $v$ approves of $c$ and none of $S$, we know that $a_v\in(\lefty{c}{S},c]$ and $b_v\in[c,\righty{c}{S})$. We also have $a_v\in(\phi_1,\phi_2]$ and $b_v\in [\phi_3,\phi_4)$, 
    where the quantities $\phi_1, \dots, \phi_4$
    are defined in \Cref{alg:possible}.
    Therefore $(\phi_1,\phi_2]\cap(\lefty{c}{S},c]\neq \varnothing$ and $[\phi_3,\phi_4)\cap [c,\righty{c}{S})\neq\varnothing$, and we have \texttt{poss}$(v,c,S)$ for all $v\in G$ who approve of a query. Any voter in $G$ that does not approve of any query gets counted in $u$ and hence $u+s_{\ell,c,j}\geq |G|\geq n\ell/k$.  Therefore, if $\W$ is output by \Cref{alg:PJR}, then it provides PJR+ for the election. 
\end{document}